\def\eps{\varepsilon}
\theoremstyle{plain}
	\newtheorem{Def}{Definition}[section]
	\newtheorem{Thm}{Theorem}[section]
	\newtheorem{Lemma}[Thm]{Lemma}
	\newtheorem{Cor}[Thm]{Corollary}
\theoremstyle{Def}
	\newtheorem{Ex}{Example}[section]
\begin{document}

\title{Assisted Problem Solving and Decompositions of Finite Automata\thanks{This work was supported in part by the grant VEGA 1/3106/06.}}
\author{Peter Ga\v zi \and Branislav Rovan
}

\date{{\small Department of Computer Science, Comenius University\\
Mlynsk\' a dolina, 842 48, Bratislava, Slovakia\\
\texttt{\{gazi,rovan\}@dcs.fmph.uniba.sk}\\
}
}

\maketitle              

\begin{abstract}
A study of assisted problem solving formalized via decompositions of
deterministic finite automata is initiated. The landscape of new
types of decompositions of finite automata this study uncovered is
presented. Languages with various degrees of decomposability between
undecomposable and perfectly decomposable are shown to exist.
\end{abstract}

\section{Introduction}

In the present paper we initiate the study of {\em assisted problem
solving}. We intend to model and study situations, where solution to
the problem can be sought based on some additional a priori
information about the inputs. One can expect to obtain simpler
solution in such case. There are similar approaches known in the
literature, most notably the notions of advice functions
\cite{BalcazarDiazGabarro}, where the additional information is
based on the length of the input word and the notion of promise
problems \cite{EvenSelmanYacobi}, where the set of inputs is
separated into three classes -- those with ``yes'' answer, those
with ``no'' answer and those where we do not care about the outcome.
By considering the simplest case where the ``problem solving''
machinery is the deterministic finite automaton (DFA) we obtain a
new motivation for studying new types of finite automata
decompositions.

In this paper we shall thus consider the case where solving a
problem shall mean constructing an automaton for a given language
$L$. The ``assistance'' shall be given by additional information
about the input, e.g., that we can assume the inputs shall be
restricted to words from a particular regular language $L'$. Thus,
instead of looking for an automaton $A$ such that $L=L(A)$ we can
look for a (possibly simpler) automaton $B$ such that $L=L(B) \cap
L'$. We can then say that $B$ accepts $L$ with the assistance of
$L'$. We shall call $L'$ (or the corresponding automaton $A'$ such
that $L'=L(A')$) an {\em advisor} to $B$. In this case the advisor
$A'$ provides assistance to the solver B by guaranteeing that $A'$
accepts the given input word. We shall also study a case where the
assistance provides more detailed information about the outcome of
the computation of $A'$ on the input word (e.g., the state reached).
Clearly the advisor can be considered useful only if it enables $B$
to be simpler than $A$ and at the same time $A'$ is not more
complicated than $A$. The measure of complexity we shall consider is
the number of states of the deterministic finite automaton. This
measure of complexity was used quite often recently due to renewed
interest in finite automata prompted by applications such as model
checking (see e.g. \cite{ShengYu} for a recent survey). (Note
that results complementary to ours, namely results on complexity of
automata for the intersection of regular sets were studied in
\cite{Birget}.)

The contribution of our paper is twofold. First, we can interpret
the `solver' and the `advisor' as two parallel processes each
performing a {\em different} task and jointly solving a problem.
Since our approach lends itself to a generalisation to $k$ advisors
it may stimulate new parallel solutions to problems (the traditional
ones usually using parallel processes to perform essentially the
same task). Second, the choice of finite automata as the simplest
problem solving machinery brought about new types of decompositions
motivated by the information the `advisor' can provide to the
`solver'. Our results provide a complete picture of the landscape of
these decompositions.

The problem within this scenario we shall address in this paper is the existence of a
useful advisor for a given automaton $A$. We shall compare the power of several
types of advisors, and investigate the effect of the advisor on the
complexity of the assisted solver $B$. We can formulate this also as
a problem of decomposition of deterministic finite state automata --
given DFA $A$ find DFA $A_1$ (a solver) and  $A_2$ (an advisor)
such that $w \in L(A)$ can be determined from the computations of $A_1$
and $A_2$. We shall study several new types of decompositions of
DFA, one of them is analogous to the state behavior decomposition of
finite state transducers studied in \cite{HartmanisStearns}. In
Sect.~\ref{sec:relations} we prove relations among these decompositions. For each
type of decomposition there are automata which are undecomposable
and automata for which there is a decomposition that is the best
possible. In Sect.~\ref{sec:degrees} we consider the space between these extreme
points and study the degree of decomposability.

\section{Definitions and Notation}\label{sec:def}

We shall use standard notions of the theory of formal languages (see
e.g. \cite{HopcroftUllman}). Our notation shall be as follows.
$\Sigma^*$ denotes the set of all words over the alphabet $\Sigma$,
the length of a word $w$ is denoted by $|w|$, $\eps$ denotes the empty
word, and for a language $L$ we shall denote by $\Sigma_L$ the
minimal alphabet such that $L\subseteq\Sigma_L^*$. The number of
occurrences of a given letter $a$ in a word $w$ is denoted by
$\#_a(w)$. Throughout this paper we shall consider deterministic
finite automata only.

A \emph{deterministic finite automaton} (DFA) is a quintuple
$(K,\Sigma,\delta,q_0,F)$, such that $K$ is a finite set of states,
$\Sigma$ is a finite input alphabet, $q_0\in K$ is the initial
state, $F\subseteq K$ is the set of accepting states and
$\delta\colon K\times\Sigma\to K$ is a transition function. As
usual, we shall denote by $\delta$ also the standard extension of
$\delta$ to words, i.e., $\delta\colon K\times\Sigma^*\to K$. We
shall denote by $|K|$ the number of states in $K$.

Formalizing the notions of assisted problem solving from the
Introduction we shall now define several types of decompositions of
DFA $A$ into two (simpler) DFAs $A_1$ and $A_2$ (a solver and an
advisor) so that the membership of an input word $w$ in $L(A)$ can be
determined based on the information on the computations of $A_1$ and
$A_2$ on $w$.

We first introduce an \emph{acceptance-identifying} decomposition of
deterministic finite automata.

\begin{Def}
A pair of DFAs $(A_1,A_2)$, where
$A_1=(K_1,\Sigma,\delta_1,q_1,F_1)$ and $A_2=(K_2,\Sigma,\delta_2,q_2,F_2)$, forms an
{\em acceptance-identifying decomposition} (AI-decomposition) of a DFA $A=(K,\Sigma,\delta,q_0,F)$, if
$L(A)=L(A_1)\cap L(A_2)$.
This decomposition is {\em nontrivial} if $|K_1|<|K|$ and $|K_2|<|K|$.
\end{Def}

By decomposing $A$ in this manner, one of the decomposed automata (say $A_2$) can act as
an advisor and narrow down the set of input words for the other one (say $A_1$), whose
task to recognize the words of $L(A)$ may become easier.

Another requirement we could pose on a decomposition is to identify the final state of any
computation of the original automaton by only knowing the final states of both
corresponding
computations of the automata forming the decomposition. This requirement can be formalized
as follows.

\begin{Def}\label{def:SI}
A pair of DFAs $(A_1,A_2)$, where
$A_1=(K_1,\Sigma,\delta_1,q_1,F_1)$ and $A_2=(K_2,\Sigma,\delta_2,q_2,F_2)$, forms a
{\em state-identifying
decomposition} (SI-decomposition) of a DFA $A=(K,\Sigma,\delta,q_0,F)$, if there exists a mapping
$\beta\colon K_1\times K_2\to K$, such that it holds
$\beta(\delta_1(q_1,w),\delta_2(q_2,w))=\delta(q_0,w)$ for all $w\in\Sigma^*$.
This decomposition is {\em nontrivial} if $|K_1|<|K|$ and $|K_2|<|K|$.
\end{Def}

The third -- and the weakest -- requirement we pose on a
decomposition of a DFA is to require that there must exist a way to
determine whether the original automaton would accept some given
input word based on knowing the states in which the computations of
both decomposition automata have finished.

\begin{Def}\label{def:wAI}
A pair of DFAs $(A_1,A_2)$, where
$A_1=(K_1,\Sigma,\delta_1,q_1,F_1)$ and $A_2=(K_2,\Sigma,\delta_2,q_2,F_2)$, forms a
weak {\em acceptance-identifying decomposition} (wAI-decomposition) of a DFA $A=(K,\Sigma,\delta,q_0,F)$,
if there exists a relation $R\subseteq K_1\times K_2$ such that it holds
$R(\delta_1(q_1,w),\delta_2(q_2,w))\Leftrightarrow w\in L(A)$ for all $w\in\Sigma^*$.
This decomposition is {\em nontrivial} if $|K_1|<|K|$ and $|K_2|<|K|$.
\end{Def}

Note that in the last two definitions, the sets of accepting states of $A_1$ and $A_2$ are
irrelevant.

By a decomposability of a regular language $L$ in some
way, we shall mean the decomposability of the corresponding minimal
automaton over $\Sigma_L$.

To be able to compare these new types of decomposition to the
\emph{parallel decompositions of state behavior} introduced for
sequential machines in \cite{HartmanisStearns}, we shall redefine
them for DFAs.

\begin{Def}\label{def:realization}
A DFA $A'=(K',\Sigma,\delta',q_0',F')$ is said to \emph{realize the state behavior} of a DFA
$A=(K,\Sigma,\delta,q_0,F)$ if there exists an injective mapping $\alpha\colon K\to K'$
such that
\begin{enumerate}
\item[(i)]
$(\forall a\in\Sigma)(\forall q\in K); \delta'(\alpha(q),a)=\alpha(\delta(q,a))$,
\item[(ii)]
$\alpha(q_0)=q_0'$.
\end{enumerate}
Moreover, $A'$ is said to \emph{realize the state and acceptance behavior} of $A$, if
in addition the following property holds:
\begin{enumerate}
\item[(iii)]
$(\forall q\in K); \alpha(q)\in F'\Leftrightarrow q\in F$.
\end{enumerate}
\end{Def}

\begin{Def}
The \emph{parallel connection} of two DFA $A_1=(K_1,\Sigma,\delta_1,q_1,F_1)$ and
$A_2=(K_2,\Sigma,\delta_2,q_2,F_2)$ is the DFA
$A=A_1||A_2=(K_1\times K_2,\Sigma,\delta,(q_1,q_2),$ $F_1\times F_2)$ such
that $\delta((p_1,p_2),a)=(\delta_1(p_1,a),\delta_2(p_2,a))$.
\end{Def}

\begin{Def}\label{def:decomposition}
A pair of DFAs $(A_1,A_2)$ is a \emph{state behavior (SB-)
decomposition} of a DFA $A$ if
$A_1||A_2$ realizes the state behavior of $A$. The pair $(A_1,A_2)$ is an \emph{acceptance
and state behavior (ASB-)
decomposition} of $A$ if $A_1||A_2$ realizes the state and acceptance behavior of $A$.
This decomposition is \emph{nontrivial} if both $A_1$ and $A_2$ have fewer states than $A$.
\end{Def}

We have
modified the definitions to fit the formalism and purpose of deterministic finite
automata (i.e., to accept formal languages) without loosing the connection to the
strongly related and useful concept of \emph{S.P.partitions}, exhibited below.

We shall use the following notation and properties of S.P. partitions from
\cite{HartmanisStearns}. A partition $\pi$ on a set of states of a DFA
$A=(K,\Sigma,\delta,q_0,F)$  has \emph{substitution property}
(S.P.), if it holds $\forall p,q\in K;~~ p\equiv_\pi q \Rightarrow (\forall a\in\Sigma;
\delta(p,a)\equiv_\pi\delta(q,a))$.
If $\pi_1$ and $\pi_2$ are partitions on a given set $M$, then
\begin{enumerate}
\item[(i)]
$\pi_1\cdot\pi_2$ is a partition on $M$ such that
$a\equiv_{\pi_1\cdot\pi_2}b\Leftrightarrow a\equiv_{\pi_1}b\land a\equiv_{\pi_2}b$,
\item[(ii)]
$\pi_1+\pi_2$ is a partition on $M$ such that
$a\equiv_{\pi_1+\pi_2}b$ iff there exists a sequence $a=a_0, a_1, a_2, \ldots,
a_n=b$, such that $a_i\equiv_{\pi_1}a_{i+1}\lor a_i\equiv_{\pi_2}a_{i+1}$ for all
$i\in\{0,\ldots,n-1\}$,
\item[(iii)]
$\pi_1\preceq\pi_2$ if it holds $(\forall x,y\in M);~~ x\equiv_{\pi_1}y\Rightarrow
x\equiv_{\pi_2}y$.
\end{enumerate}
The set of all partitions on a given set (with the partial order $\preceq$,
join realized by $+$ and meet realized by $.$) forms a
lattice. The set of all S.P. partitions on the set of states of a given DFA
forms a sublattice of the lattice of all partitions on this set.
The trivial partitions $\{\{q_0\},
\{q_1\},\ldots ,\{q_n\}\}$ and $\{\{q_0,q_1,\ldots,q_n\}\}$ shall be denoted by symbols $0$ and $1$,
respectively. The block of a partition $\pi$ containing the state $q$ shall be denoted by
$[q]_{\pi}$.
In addition, we shall use the following separation notion.

\begin{Def}\label{def:separate}
The partitions $\pi_1=\{R_1,\ldots ,R_k\}$ and $\pi_2=\{S_1,\ldots ,S_l\}$ on
a set of states of a DFA $A=(K,\Sigma,\delta,q_0,F)$ are said to \emph{separate the final
states}
of $A$ if there exist indices $i_1,\ldots,i_r$ and $j_1,\ldots,j_s$ such that it holds
$(R_{i_1}\cup\ldots\cup R_{i_r})\cap(S_{j_1}\cup\ldots\cup S_{j_s})=F$.
\end{Def}

\section{Relations Between Types of Decompositions}\label{sec:relations}

The concept of partitions separating the final states allows us to derive a
necessary and sufficient condition for the existence of SB- and ASB-decompositions similar to the
one stated in \cite{HartmanisStearns}.

\begin{Thm}\label{thm:SB-condition}
A DFA $A=(K,\Sigma,\delta,q_0,F)$ has a nontrivial
SB-decomposition iff there exist two nontrivial S.P.
partitions $\pi_1$ and $\pi_2$ on the set of states of $A$ such that $\pi_1\cdot\pi_2=0$.
This decomposition is an ASB-decomposition if and only if these partitions separate the final states of $A$.
\end{Thm}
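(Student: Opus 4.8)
The plan is to set up a two-way correspondence between nontrivial SB-decompositions of $A$ and pairs of nontrivial S.P. partitions of product $0$, the bridge being the quotient construction. For an S.P. partition $\pi$ on $K$ I would take $A_\pi$ to be the DFA whose states are the blocks of $\pi$, with initial state $[q_0]_\pi$ and transition $\delta_\pi([q]_\pi,a)=[\delta(q,a)]_\pi$; the substitution property is exactly what makes $\delta_\pi$ well defined.

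For the implication ($\Leftarrow$), given nontrivial S.P. partitions $\pi_1,\pi_2$ with $\pi_1\cdot\pi_2=0$ I would put $A_i=A_{\pi_i}$ and define $\alpha\colon K\to K_1\times K_2$ by $\alpha(q)=([q]_{\pi_1},[q]_{\pi_2})$. Condition (i) of Definition~\ref{def:realization} is immediate from the product transition of $A_1||A_2$ together with the definition of $\delta_{\pi_i}$, and (ii) holds since $\alpha(q_0)$ is the initial state of the parallel connection. Injectivity of $\alpha$ is precisely $\pi_1\cdot\pi_2=0$, because $\alpha(p)=\alpha(q)$ says $p\equiv_{\pi_1}q$ and $p\equiv_{\pi_2}q$, hence $p\equiv_0 q$. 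Since each $\pi_i\neq 0$ has fewer than $|K|$ blocks, $|K_i|<|K|$ and the decomposition is nontrivial.

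For ($\Rightarrow$), given a nontrivial SB-decomposition with witnessing injection $\alpha$, I would write $\alpha(q)=(\alpha_1(q),\alpha_2(q))$ for the two coordinate maps and let $\pi_i$ be the kernel of $\alpha_i$, i.e. $p\equiv_{\pi_i}q\Leftrightarrow\alpha_i(p)=\alpha_i(q)$. Condition (i) forces $\alpha_i(\delta(q,a))=\delta_i(\alpha_i(q),a)$, which gives the substitution property for each $\pi_i$. Injectivity of $\alpha$ yields $\pi_1\cdot\pi_2=0$, and $|\alpha_i(K)|\le|K_i|<|K|$ gives $\pi_i\neq 0$; both are then also $\neq 1$, since $\pi_1\cdot\pi_2=0$ and $|K|>1$.

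For the acceptance claim the same correspondence is used, the final-state sets entering only through condition (iii). Taking $F_1$ to be a set of blocks $R_{i_1},\dots,R_{i_r}$ of $\pi_1$ and $F_2$ a set of blocks $S_{j_1},\dots,S_{j_s}$ of $\pi_2$, the condition $\alpha(q)\in F_1\times F_2$ reads $q\in(R_{i_1}\cup\dots\cup R_{i_r})\cap(S_{j_1}\cup\dots\cup S_{j_s})$, so (iii) holds for this choice exactly when these unions separate the final states in the sense of Definition~\ref{def:separate}. Hence separating partitions yield an ASB-decomposition, and conversely any ASB-decomposition supplies separating index sets. The one step needing care is this converse in the ($\Rightarrow$) setting: there $\alpha_i$ need not be onto $K_i$, so I must read the separating unions off $\{q : \alpha_i(q)\in F_i\}$ — which are indeed unions of $\pi_i$-blocks since $\pi_i$ is the kernel of $\alpha_i$ — rather than off $F_i$ itself.
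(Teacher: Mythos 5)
Your proposal is correct and follows essentially the approach the paper intends: the paper omits its proof, citing that it is analogous to the Hartmanis--Stearns argument, and your quotient-automaton/kernel-partition correspondence (blocks of $\pi_i$ as states in one direction, kernels of the coordinate maps $\alpha_i$ in the other) is exactly that classical construction, with the acceptance part handled through Definition~\ref{def:separate}. Your closing observation---that in the converse ASB direction one must read the separating unions off $\{q : \alpha_i(q)\in F_i\}$ rather than off $F_i$ itself, since $\alpha_i$ need not be surjective---is precisely the extension beyond \cite{HartmanisStearns} that the paper alludes to, and you handle it correctly.
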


\begin{proof}
The proof is analogous to that in \cite{HartmanisStearns} but had to be
extended for the ASB-decomposition. We omit it due to space constraints.
\end{proof}

For the other decompositions, we can derive the following sufficient conditions that
exploit the concept of S.P. partitions.

\begin{Thm}\label{thm:AI-condition}
Let $A=(K,\Sigma,\delta,q_0,F)$ be a deterministic finite automaton, let $\pi_1$ and
$\pi_2$ be nontrivial S.P. partitions on the set of states of $A$, such that
they separate the final states of $A$. Then $A$ has a nontrivial AI-decomposition.
\end{Thm}

\begin{proof}
Since $\pi_1$ and $\pi_2$ separate the final states of $A$, there exist blocks $B_1,
\ldots, B_k$ and $C_1, \ldots, C_l$ of the partitions $\pi_1$ and $\pi_2$ respectively, such that $(B_1\cup\ldots\cup
B_k)\cap(C_1\cup\ldots\cup C_l)=F$.
We shall construct two automata $A_1$
and $A_2$ having states corresponding to blocks of these partitions and show that
$(A_1,A_2)$ is a nontrivial AI-decomposition of $A$. Let
$A_1=(\pi_1,\Sigma,\delta_1,[q_0]_{\pi_1},\{B_1, \ldots, B_k\})$ and
$A_2=(\pi_2,\Sigma,\delta_2,[q_0]_{\pi_2},\{C_1, \ldots, C_l\})$
be DFAs with
$\delta_i$ defined by $\delta_i([q]_{\pi_i},a)=[\delta(q,a)]_{\pi_i}$,
$i\in\{1,2\}$ (this definition does not depend on the choice of $q$ since $\pi_i$ is an S.P.
partition). We now need to prove that $L(A)=L(A_1)\cap L(A_2)$.

Let $w\in L(A)$. Suppose that the computation of $A$ on the word $w$ ends in some accepting state
$q_f\in F$. Then, from the construction of $A_1$ and $A_2$ it follows that the computation
of $A_i$ on the word $w$ ends in the state corresponding to the block $[q_f]_{\pi_i}$ of
the partition
$\pi_i$. Since $q_f\in F$, it must hold $[q_f]_{\pi_1}\in\{B_1,\ldots,B_k\}$ and
$[q_f]_{\pi_2}\in\{C_1,\ldots,C_l\}$, hence from the construction of $A_i$, these blocks
correspond to the accepting states in the respective automata. Thus $w\in
L(A_i)$ for $i\in\{1,2\}$, therefore $L(A)\subseteq L(A_1)\cap L(A_2)$.

Now suppose $w\in L(A_1)\cap L(A_2)$, Thus the computation of $A_1$ on $w$ ends in one of the
states $B_1, \ldots, B_k$, which means that the computation of $A$ on $w$ would end in a state
from the union of blocks $B_1\cup\ldots\cup B_k$. Using the same argument for $A_2$, we
get that the computation of $A$ on $w$ would end in a state from $C_1\cup \ldots \cup C_l$.
Since $(B_1\cup\ldots\cup B_k)\cap(C_1\cup\ldots\cup C_l)=F$ we obtain that the
computation of $A$ ends in an accepting state, hence $w\in L(A)$ and $L(A_1)\cap L(A_2)\subseteq L(A)$.

Since both partitions are nontrivial, so is the AI-decomposition obtained.
\end{proof}

\begin{Thm}\label{thm:wAI-condition}
Let $A=(K,\Sigma,\delta,q_0,F)$ be a deterministic finite automaton, let $\pi_1$ and
$\pi_2$ be nontrivial S.P. partitions on the set of states of $A$, such that
\hbox{$\pi_1\cdot\pi_2\preceq\{F,K - F\}$}. Then $A$ has a nontrivial wAI-decomposition.
\end{Thm}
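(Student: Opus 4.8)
The plan is to reuse the construction from the proof of Theorem~\ref{thm:AI-condition} essentially verbatim, building the same two automata $A_1$ and $A_2$ from the S.P.\ partitions $\pi_1$ and $\pi_2$, but now replacing the notion of acceptance by an appropriately chosen relation $R$ on pairs of blocks. The key observation is that the condition $\pi_1\cdot\pi_2\preceq\{F,K-F\}$ is weaker than the separation condition used for AI-decompositions, so we cannot in general single out blocks of $\pi_1$ (or of $\pi_2$) that capture acceptance on their own; this is exactly why the weaker wAI-decomposition, which is allowed to consult the \emph{pair} of final blocks via the relation $R$, is the right target.

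First I would set $A_1=(\pi_1,\Sigma,\delta_1,[q_0]_{\pi_1},\emptyset)$ and $A_2=(\pi_2,\Sigma,\delta_2,[q_0]_{\pi_2},\emptyset)$, with $\delta_i([q]_{\pi_i},a)=[\delta(q,a)]_{\pi_i}$ as before, noting that by the remark following Definition~\ref{def:wAI} the accepting-state sets are irrelevant here. The states of $A_1$ are blocks of $\pi_1$ and those of $A_2$ are blocks of $\pi_2$, so a pair $(B,C)\in\pi_1\times\pi_2$ corresponds to a block $B\cap C$ of the product partition $\pi_1\cdot\pi_2$. I would then define the relation $R\subseteq\pi_1\times\pi_2$ by declaring $R(B,C)$ to hold precisely when the block $B\cap C$ of $\pi_1\cdot\pi_2$ is contained in $F$. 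This is well defined thanks to the hypothesis $\pi_1\cdot\pi_2\preceq\{F,K-F\}$: every block of $\pi_1\cdot\pi_2$ lies entirely inside $F$ or entirely inside $K-F$, so $B\cap C\subseteq F$ is equivalent to $B\cap C\cap F\neq\emptyset$ and there is no ambiguous block straddling the $F/(K-F)$ boundary.

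The verification step is then to show $R(\delta_1([q_0]_{\pi_1},w),\delta_2([q_0]_{\pi_2},w))\Leftrightarrow w\in L(A)$ for every $w$. For a fixed $w$, let $q_f=\delta(q_0,w)$; by the same argument as in Theorem~\ref{thm:AI-condition}, the S.P.\ property guarantees $\delta_1([q_0]_{\pi_1},w)=[q_f]_{\pi_1}$ and $\delta_2([q_0]_{\pi_2},w)=[q_f]_{\pi_2}$. Hence the pair reached is $([q_f]_{\pi_1},[q_f]_{\pi_2})$, and $[q_f]_{\pi_1}\cap[q_f]_{\pi_2}$ is exactly the block $[q_f]_{\pi_1\cdot\pi_2}$, which contains $q_f$. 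By the definition of $R$ this pair is in $R$ iff $[q_f]_{\pi_1\cdot\pi_2}\subseteq F$, and because the block refines $\{F,K-F\}$ this holds iff $q_f\in F$, i.e.\ iff $w\in L(A)$. Finally, nontriviality of the decomposition follows immediately from the nontriviality of $\pi_1$ and $\pi_2$, exactly as in the AI case.

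I expect the only real subtlety to be the well-definedness of $R$, i.e.\ making explicit that the refinement hypothesis forces each product block to sit on one side of the $F/(K-F)$ split; once that is nailed down, the equivalence is a direct transcription of the reasoning already carried out for AI-decompositions, so no genuinely new difficulty should arise.
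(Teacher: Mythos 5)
Your proposal is correct and matches the paper's own proof essentially verbatim: the same quotient automata $A_i=(\pi_i,\Sigma,\delta_i,[q_0]_{\pi_i},\emptyset)$, the same relation $R(B,C)\Leftrightarrow B\cap C\subseteq F$, and the same verification that the reached pair $([q_f]_{\pi_1},[q_f]_{\pi_2})$ satisfies $R$ iff $q_f\in F$ because the product block containing $q_f$ lies entirely in $F$ or in $K-F$. The only cosmetic difference is your emphasis on ``well-definedness'' of $R$, which is not actually an issue (the relation is well defined with or without the hypothesis; the hypothesis is used only in the final equivalence), but this does not affect correctness.
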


\begin{proof}
We shall construct $A_1$ and $A_2$ corresponding to the S.P. partitions $\pi_1$ and
$\pi_2$ as follows:
$A_i=(\pi_i,\Sigma,\delta_i,[q_0]_{\pi_i},\emptyset)$, where
\hbox{$\delta_i([q]_{\pi_i},a)=[\delta(q,a)]_{\pi_i}$} and \hbox{$i\in\{1,2\}$}. To show
that $(A_1,A_2)$ is a wAI-decomposition of $A$, we define the
relation $R\subseteq \pi_1\times\pi_2$ by the equivalence $R(D_1,D_2)\Leftrightarrow(D_1\cap
D_2\subseteq F)$,where $D_i$ is some block of the partition $\pi_i$.
Now we need to prove that $\forall w\in\Sigma^*$;
$w\in L(A)\Leftrightarrow
R(\delta_1([q_0]_{\pi_1},w),\delta_2([q_0]_{\pi_2},w))$.

Let the computation of $A$ on $w$ end in some state
$p\in K$. It follows that the computation of
$A_i$ on the word $w$ ends in the state corresponding to the block $[p]_{\pi_i}$,
$i\in\{1,2\}$. Thus
$R(\delta_1([q_0]_{\pi_1},w),\delta_2([q_0]_{\pi_2},w))
\Leftrightarrow R([p]_{\pi_1},[p]_{\pi_2})$ and by the definition of $R$, we have
$R(\delta_1([q_0]_{\pi_1},w),\delta_2([q_0]_{\pi_2},w))
\Leftrightarrow [p]_{\pi_1}\cap [p]_{\pi_2}\subseteq F$. Since
$p\in[p]_{\pi_1}\cap [p]_{\pi_2}$, $[p]_{\pi_1}\cap [p]_{\pi_2}$ is a block
of the partition $\pi_1\cdot\pi_2$ and $\pi_1\cdot\pi_2\preceq\{F,K - F\}$, it must hold
that either $[p]_{\pi_1}\cap [p]_{\pi_2}\subseteq F$ or $[p]_{\pi_1}\cap
[p]_{\pi_2}\subseteq K - F$. Therefore
$R(\delta_1([q_0]_{\pi_1},w),\delta_2([q_0]_{\pi_2},w))
\Leftrightarrow p\in F$ and the proof is complete.
\end{proof}

It follows directly from the definitions, that each SI-decom\-po\-sition is
also a wAI-decomposition, and so is each AI-decomposition. Also, each ASB-decomposition is
an AI-decomposition, which is a consequence of the definition of acceptance and state
behavior realization. For minimal automata, a relationship between AI- and SI-decompositions can be obtained.

\begin{Thm}\label{thm:AI-SI-min}
Let $A=(K,\Sigma,\delta,q_0,F)$ be a minimal DFA, let $(A_1,A_2)$ be its
AI-decomposition. Then $(A_1,A_2)$ is also an SI-decomposition of $A$.
\end{Thm}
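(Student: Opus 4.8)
The plan is to define the mapping $\beta$ on precisely those pairs of states that are jointly reachable in $A_1$ and $A_2$, and to extend it arbitrarily elsewhere. For a word $w$ the relevant pair is $(\delta_1(q_1,w),\delta_2(q_2,w))$, and the natural choice is to set $\beta(\delta_1(q_1,w),\delta_2(q_2,w))=\delta(q_0,w)$. The whole difficulty is to verify that this prescription is well defined, since two different words $w$ and $w'$ may drive $A_1$ and $A_2$ into the same pair of states; I must show that they then also drive $A$ into the same state.

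So I would suppose $\delta_1(q_1,w)=\delta_1(q_1,w')$ and $\delta_2(q_2,w)=\delta_2(q_2,w')$ and aim to conclude $\delta(q_0,w)=\delta(q_0,w')$. The key observation is that, because $A$ is minimal, the states $\delta(q_0,w)$ and $\delta(q_0,w')$ coincide as soon as $w$ and $w'$ are equivalent with respect to $L(A)$, i.e. as soon as $wu\in L(A)\Leftrightarrow w'u\in L(A)$ for every $u\in\Sigma^*$. This is the step where minimality is essential: in a non-minimal $A$ the two states could differ even though the words are equivalent, so the well-definedness would fail.

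To establish this equivalence I would use the AI-decomposition hypothesis $L(A)=L(A_1)\cap L(A_2)$ together with the fact that the state reached by $A_i$ after reading $w$ already determines the fate of any continuation: $wu\in L(A_i)$ iff $\delta_i(\delta_i(q_i,w),u)\in F_i$. Since $\delta_i(q_i,w)=\delta_i(q_i,w')$ for $i\in\{1,2\}$, this gives $wu\in L(A_i)\Leftrightarrow w'u\in L(A_i)$ for each $i$ and every $u$, and intersecting the two conditions yields $wu\in L(A)\Leftrightarrow w'u\in L(A)$. By minimality of $A$ this forces $\delta(q_0,w)=\delta(q_0,w')$, so $\beta$ is well defined on all reachable pairs.

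Finally I would note that the definition of an SI-decomposition constrains $\beta$ only on pairs of the form $(\delta_1(q_1,w),\delta_2(q_2,w))$, so $\beta$ may be completed to a total function $K_1\times K_2\to K$ in any manner whatsoever (for instance by sending every remaining pair to $q_0$). The defining identity $\beta(\delta_1(q_1,w),\delta_2(q_2,w))=\delta(q_0,w)$ then holds for all $w\in\Sigma^*$ by construction, which is exactly what Definition~\ref{def:SI} requires. I expect the well-definedness argument to be the only real content of the proof; everything else is routine bookkeeping.
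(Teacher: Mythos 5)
Your proof is correct and takes essentially the same route as the paper's: both define $\beta$ on the reachable pairs $(\delta_1(q_1,w),\delta_2(q_2,w))$ to be $\delta(q_0,w)$, extend it arbitrarily elsewhere, and use minimality of $A$ to guarantee this is well defined. The only difference is one of packaging --- the paper invokes as a known fact the existence of a transition-compatible mapping from the product automaton $A_1||A_2$ (which accepts $L(A)$) onto the minimal DFA $A$, whereas you prove precisely that fact by hand via the Myhill--Nerode distinguishability argument.
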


\begin{proof}
Since $(A_1,A_2)$ is an AI-decomposition of $A$, $L(A)=L(A_1)\cap L(A_2)$. Therefore if we
use the well-known Cartesian product
construction, we obtain the automaton $A_1||A_2$ such that $L(A_1||A_2)=L(A)$.
Since $A$ is the minimal automaton accepting the language $L(A)$, there exists a mapping
$\beta\colon K'\to K$ such that it holds
$(\forall w\in\Sigma^*);~~
\beta(\delta'(q_0',w))=\delta(\beta(q_0'),w)$,
where $\delta'$ is the transition function of $A_1||A_2$, $K'$ is its
set of states  and $q_0'$ is its initial state. Since $A_1||A_2$ is a parallel
connection (i.e., $K'=K_1\times K_2$, $q_0'$ is the pair of initial states of $A_1$ and
$A_2$), it is easy to see that $\beta$ is in fact
exactly the mapping required by the definition of the SI-decomposition.
\end{proof}

The ASB-decomposition is a combination of the SB-decomposition and the AI-decomposition,
as the next theorem shows.

\begin{Thm}\label{thm:ASB=AI+SB}
Let $A$ be a DFA without unreachable states. $(A_1,A_2)$ is an ASB-decomposition of $A$ iff $(A_1,A_2)$ is
both an SB-decomposition and an AI-decomposition of $A$.
\end{Thm}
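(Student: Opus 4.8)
The plan is to prove the two directions of the biconditional separately, and since one direction is immediate I would dispatch it first. The forward direction requires almost nothing: by the remark preceding Theorem~\ref{thm:AI-SI-min}, every ASB-decomposition is automatically an AI-decomposition (this follows directly from the definition of acceptance-and-state-behavior realization). And an ASB-decomposition is by definition an SB-decomposition with the extra acceptance condition (iii), so it is in particular an SB-decomposition. Thus the left-to-right implication is just unpacking the definitions.

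The substance lies in the converse: assuming $(A_1,A_2)$ is simultaneously an SB-decomposition and an AI-decomposition, I must produce a single injective map $\alpha$ witnessing that $A_1 \| A_2$ realizes the \emph{state and acceptance} behavior of $A$, i.e. satisfying (i), (ii), and crucially (iii). The SB-decomposition hypothesis already hands me an injective $\alpha\colon K \to K_1\times K_2$ satisfying (i) and (ii), so the entire task reduces to verifying property (iii) for this \emph{same} $\alpha$: that $q \in F \Leftrightarrow \alpha(q) \in F_1\times F_2$. This is where the AI-hypothesis and the ``no unreachable states'' assumption must be brought to bear.

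Here is the key mechanism I would use. Because $A$ has no unreachable states, every $q \in K$ equals $\delta(q_0,w)$ for some word $w$. Applying (i) and (ii) inductively gives $\alpha(q) = \alpha(\delta(q_0,w)) = \delta'((q_1,q_2),w)$, i.e. $\alpha(q)$ is exactly the pair of states $(\delta_1(q_1,w),\delta_2(q_2,w))$ reached by $A_1$ and $A_2$ on $w$. Now $\alpha(q) \in F_1\times F_2$ means precisely that $w$ is accepted by both $A_1$ and $A_2$, i.e. $w \in L(A_1)\cap L(A_2)$. By the AI-decomposition hypothesis $L(A_1)\cap L(A_2) = L(A)$, so this is equivalent to $w \in L(A)$, which (by determinism of $A$) is equivalent to $\delta(q_0,w) = q \in F$. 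Chaining these equivalences yields $\alpha(q) \in F_1\times F_2 \Leftrightarrow q\in F$, which is exactly (iii).

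The one point demanding care — and the spot I expect to be the main obstacle — is well-definedness: the word $w$ realizing a given state $q$ is not unique, so I must confirm the argument does not depend on the choice of $w$. This is handled because $\alpha(q)$ itself depends only on $q$ (it is a fixed value once $q$ is fixed), and likewise membership $q \in F$ depends only on $q$; the word $w$ is merely a bookkeeping device linking the two via the reachability of $q$. Since $A$ is deterministic, any two words reaching $q$ are either both in $L(A)$ or both outside it, consistently with $q$'s membership in $F$, so no ambiguity arises. The reachability assumption is genuinely needed: without it, states $q$ outside the image of the $\delta(q_0,\cdot)$ map could carry an $\alpha$-value incompatible with (iii) while leaving the AI- and SB-conditions untouched, so this hypothesis cannot be dropped.
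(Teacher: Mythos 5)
Your proof is correct and follows essentially the same route as the paper's: the forward direction by unpacking the definitions, and the converse by taking the injective $\alpha$ from the SB-decomposition, using reachability to write each state as $\delta(q_0,w)$, and chaining $q\in F \Leftrightarrow w\in L(A) \Leftrightarrow w\in L(A_1)\cap L(A_2) \Leftrightarrow \alpha(q)\in F_1\times F_2$ via the AI-hypothesis. Your additional remarks on well-definedness and on the necessity of the reachability assumption are sound but not needed beyond what the paper's own argument already implicitly handles.
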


\begin{proof}
The first implication clearly follows from the definitions, Theorem~\ref{thm:SB-condition} and
Theorem~\ref{thm:AI-condition}.
Now let $(A_1,A_2)$ be an SB- and AI-decomposition of $A=(K,\Sigma,\delta,q_0,F)$. Let $\alpha$ be the mapping
given by the definition of SB-decomposition. We need to prove that for all states $q$ of
$A$, $q\in F$ iff $\alpha(q)\in F_1\times F_2$, where $F_i$ is the set of
accepting states of $A_i$, $i\in\{1,2\}$. Let $q\in K$ and let $w$ be a word such that
$\delta(q_0,w)=q$. Then
$q\in F\Leftrightarrow w\in L(A)\Leftrightarrow w\in
L(A_1)\cap L(A_2)\Leftrightarrow\alpha(q)\in F_1\times F_2$,
where the first equivalence is implied by the choice of $w$, the second holds because
$(A_1,A_2)$ is an AI-decomposition and the third is a consequence of the properties of
$\alpha$ guaranteed by the SB-decomposition definition.
\end{proof}

There is also a relationship between SB- and SI-decompositions, in fact SB- is a stronger
version of the state-identifying decomposition, as the following two theorems show. We
need the notion of reachability on pairs of states.

\begin{Def}
Let $A_1=(K_1,\Sigma,\delta_1,p_1,F_1)$ and $A_2=(K_2,\Sigma,\delta_2,p_2,F_2)$ be DFAs.
We shall call a pair of states $(q,r)\in K_1\times K_2$ \emph{reachable}, if there exists
a word $w\in\Sigma^*$ such that $\delta_1(p_1,w)=q$ and
$\delta_2(p_2,w)=r$.
\end{Def}

\begin{Thm}\label{thm:SI-SB}
Let $A=(K,\Sigma,\delta,q_0,F)$ be a DFA and let $(A_1,A_2)$ be its
SB-decomposition. Then $(A_1,A_2)$ also forms an SI-decomposition of $A$.
\end{Thm}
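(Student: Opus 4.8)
The plan is to invert the injective embedding $\alpha$ supplied by the SB-decomposition. First I would record the key consequence of Definition~\ref{def:realization}(i)--(ii): a routine induction on $|w|$ shows that $\alpha$ commutes with the extended transition functions, so that for every $w\in\Sigma^*$
\[
\alpha(\delta(q_0,w))=\delta'(\alpha(q_0),w)=\delta'((q_1,q_2),w)=(\delta_1(q_1,w),\delta_2(q_2,w)),
\]
where $\delta'$ is the transition function of $A_1||A_2$. The base case is property~(ii), and the inductive step applies property~(i) together with the definition of the parallel connection. This identity says precisely that every reachable pair of $A_1||A_2$ is the $\alpha$-image of the state that $A$ reaches on the same input word.

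Next I would define the map $\beta\colon K_1\times K_2\to K$ required by Definition~\ref{def:SI}. Since $\alpha$ is injective, it is a bijection onto its image $\mathrm{im}(\alpha)\subseteq K_1\times K_2$, so I can set $\beta(p_1,p_2)=\alpha^{-1}(p_1,p_2)$ whenever $(p_1,p_2)\in\mathrm{im}(\alpha)$, and let $\beta$ take any fixed value (say $q_0$) on the remaining pairs. This makes $\beta$ a total, well-defined function.

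Finally I would verify the SI-condition. For an arbitrary $w$, the pair $(\delta_1(q_1,w),\delta_2(q_2,w))$ equals $\alpha(\delta(q_0,w))$ by the identity of the first step, hence lies in $\mathrm{im}(\alpha)$, so
\[
\beta(\delta_1(q_1,w),\delta_2(q_2,w))=\alpha^{-1}(\alpha(\delta(q_0,w)))=\delta(q_0,w),
\]
which is exactly the requirement of Definition~\ref{def:SI}.

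The only point that requires care is that $\alpha$ need not be surjective, so one cannot simply declare $\beta$ to be $\alpha^{-1}$ on all of $K_1\times K_2$. The resolution is that the SI-condition constrains $\beta$ only on the \emph{reachable} pairs, and the first step shows that these are exactly the pairs lying in $\mathrm{im}(\alpha)$; the values of $\beta$ on the unreachable pairs are irrelevant and may be chosen arbitrarily. This is where the reachability notion introduced just above the theorem does its work.
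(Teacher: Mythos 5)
Your proof is correct and follows essentially the same route as the paper: both define $\beta$ as the inverse of $\alpha$ on its image (with an arbitrary default value, $q_0$, elsewhere), justify well-definedness by injectivity of $\alpha$, and verify the SI-condition via the inductive commutation identity $\alpha(\delta(q_0,w))=(\delta_1(q_1,w),\delta_2(q_2,w))$. The only difference is that you spell out the induction that the paper dismisses as ``easy,'' and you make explicit the observation that reachable pairs lie in $\mathrm{im}(\alpha)$, which the paper also relies on implicitly.
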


\begin{proof}
Let
$A_i=(K_i,\Sigma,\delta_i,q_i,F_i)$, $i\in\{1,2\}$.
Since $(A_1,A_2)$ is an SB-decomposition of $A$, there exists an injective mapping $\alpha\colon K\to
K_1\times K_2$  such that it holds $\alpha(q_0)=(q_1,q_2)$ and
$(\forall a\in\Sigma)(\forall p\in K);
\alpha(\delta(p,a))=(\delta_1(p_1,a),\delta_2(p_2,a))$,
where $\alpha(p)=(p_1,p_2)$. Let us define a new mapping $\beta\colon K_1\times K_2\to K$
by
\begin{equation}\label{eq:def_beta}
\beta(p_1,p_2)=\left\{
\begin{array}{ll}
p & \textrm{if $\exists p\in K, \alpha(p)=(p_1,p_2)$} \\
q_0 & \textrm{otherwise.}
\end{array}
\right .
\end{equation}
Since $\alpha$ is injective, there exists at most one such $p$ and this definition is
correct.

We now need to prove that $\beta$ satisfies the condition from the definition of
SI-decomposition, i.e., that
$(\forall w\in\Sigma^*);~
\beta(\delta_1(q_1,w),\delta_2(q_2,w))=\delta(q_0,w)$.
Since $\alpha(q_0)=(q_1,q_2)$ and all the pairs of states we encounter in the computation
of $A_1||A_2$ are thus reachable, this follows from the definition of $\alpha$  and
(\ref{eq:def_beta}) by an easy induction.
\end{proof}

\begin{Lemma}\label{lemma:SI-SB}
Let $A$ be a DFA without unreachable states and let $(A_1,A_2)$ be its SI-decomposition, with
$\beta$ being the corresponding mapping. Then $(A_1,A_2)$ is an SB-decomposition of $A$ if and
only if $\beta$ is injective on all reachable pairs of states.
\end{Lemma}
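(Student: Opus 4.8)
The plan is to prove both directions of the equivalence, using the mapping $\beta$ supplied by the SI-decomposition as the candidate for the injective mapping $\alpha$ required by the SB-decomposition. The key observation is that the two notions differ only in an injectivity/invertibility condition, so the whole proof hinges on relating $\beta$ (a map $K_1\times K_2\to K$) to the $\alpha$ (a map $K\to K_1\times K_2$) of Definition~\ref{def:realization}.

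First I would establish the forward direction. Suppose $(A_1,A_2)$ is an SB-decomposition. By Theorem~\ref{thm:SI-SB} it is then also an SI-decomposition, and the $\beta$ constructed in the proof of that theorem is essentially the inverse of $\alpha$ on the image of $\alpha$. Since $\alpha$ is injective and, by the SB-decomposition property, the reachable pairs of $A_1\|A_2$ are exactly those of the form $\alpha(q)$ for reachable $q\in K$, one can invert the relation $\beta(p_1,p_2)=q \Leftrightarrow \alpha(q)=(p_1,p_2)$ on reachable pairs. Because $A$ has no unreachable states, every $q\in K$ corresponds to some reachable pair, and injectivity of $\alpha$ forces $\beta$ to be injective on all reachable pairs. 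Some care is needed because a given SI-decomposition might come with a $\beta$ that differs from the canonical one off the reachable pairs; but the defining identity $\beta(\delta_1(q_1,w),\delta_2(q_2,w))=\delta(q_0,w)$ pins down $\beta$ on exactly the reachable pairs, so the argument is unaffected there.

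Next I would prove the harder converse. Assume $(A_1,A_2)$ is an SI-decomposition with $\beta$ injective on all reachable pairs, and construct $\alpha\colon K\to K_1\times K_2$ witnessing the SB-decomposition. The natural definition is to send each $q\in K$ to the unique reachable pair mapped to it: since $A$ has no unreachable states, for every $q$ there is a word $w$ with $\delta(q_0,w)=q$, and I would set $\alpha(q)=(\delta_1(q_1,w),\delta_2(q_2,w))$. I must first check this is well-defined, i.e., independent of the choice of $w$: if $\delta(q_0,w)=\delta(q_0,w')=q$, then by the SI-property both $(\delta_1(q_1,w),\delta_2(q_2,w))$ and $(\delta_1(q_1,w'),\delta_2(q_2,w'))$ are reachable pairs that $\beta$ maps to the same state $q$, so the injectivity of $\beta$ on reachable pairs forces them to be equal. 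Injectivity of $\alpha$ follows symmetrically from the SI-identity: if $\alpha(q)=\alpha(q')$ then $q=\beta(\alpha(q))=\beta(\alpha(q'))=q'$. It then remains to verify conditions (i) and (ii) of Definition~\ref{def:realization}, namely $\alpha(q_0)=(q_1,q_2)$ (take $w=\eps$) and the commutation with transitions, which is a direct computation: for $\alpha(q)=(\delta_1(q_1,w),\delta_2(q_2,w))$ one has $\alpha(\delta(q,a))=(\delta_1(q_1,wa),\delta_2(q_2,wa))=(\delta_1(\alpha(q)_1,a),\delta_2(\alpha(q)_2,a))$, which is exactly the parallel-connection transition applied to $\alpha(q)$.

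The main obstacle is the well-definedness of $\alpha$ in the converse direction — ensuring that two different words reaching the same state of $A$ produce the same pair of states in $A_1\|A_2$. This is precisely where the injectivity of $\beta$ on reachable pairs is indispensable, and it is the step that must be argued carefully rather than waved through, since without it the candidate $\alpha$ would only be a relation rather than a function. Everything else — the transition compatibility and the handling of the initial state — reduces to routine induction on word length and the definition of the parallel connection.
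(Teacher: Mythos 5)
Your proof is correct and follows essentially the same route as the paper: in the forward direction you note that the SI-identity pins $\beta$ down on reachable pairs as the inverse of the SB-mapping $\alpha$, and in the converse you invert $\beta$ on reachable pairs, your word-based definition of $\alpha(q)$ being exactly the paper's $\beta_r^{-1}(q)$ with well-definedness (the point you rightly flag as the crux) secured by injectivity of $\beta$ and surjectivity coming from the absence of unreachable states. No gaps to report.
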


\begin{proof}
Let $(A_1,A_2)$ be an SB-decomposition of $A$. It clearly follows from
Definition~\ref{def:SI}, that the corresponding $\beta$ satisfies the equation
(\ref{eq:def_beta}) in the proof of Theorem~\ref{thm:SI-SB} on all reachable pairs of
states. Since the mapping $\alpha$ is a bijection between the set of states of
$A$ and the set of all reachable pairs of states of $A_1$ and $A_2$, $\beta$ defined as
its inverse on the set of reachable pairs of states will be injective on this set.

For the other implication, let $(A_1,A_2)$ be an SI-decomposition of $A$ and let $\beta$
be injective on the set of
reachable pairs of states, let $\beta_r$ denote the mapping $\beta$ restricted onto the
set of all reachable pairs of states of $A_1,A_2$. Since $A$ has no unreachable states,
$\beta_r$ is also surjective, thus we can define a new mapping $\alpha\colon K\to
K_1\times K_2$ by the equation $\alpha(q)=\beta_r^{-1}(q)$. Since $\beta$ maps the initial
state onto the initial state, so does $\alpha$, and since $\beta$ satisfies the condition
from the Definition~\ref{def:SI}, it implies that also $\alpha$ satisfies the
condition (i) from the definition of realization of state behavior. Therefore  $(A_1,A_2)$
is an SB-decomposition of $A$, with the corresponding mapping $\alpha$.
\end{proof}

The converse of Theorem~\ref{thm:SI-SB} does not hold. The minimal automaton
for the language $L=\{a^{4k}b^{4l}|k\geq 0, l\geq 1\}$ gives a counterexample. Inspecting its
S.P. partitions shows that it has no nontrivial SB-decomposition, but it can be
AI-decomposed into minimal automata for languages $L_1=\{a^{4k}b^{l}|k\geq 0,l\geq 1\}$
and $L_2=\{w|\#_b(w)=4l; l\geq 0\}$.
According to Theorem~\ref{thm:AI-SI-min}, this AI-decomposition is also state-identifying.

Each ASB-decomposition is obviously also an SB-decomposition.
On the other hand, there exist SB-decomposable automata, that are
ASB-undecomposable. For example, the minimal automaton for the language
\begin{eqnarray*}
L_1&=&\{w\in\{a,b,c\}^*|\#_a(w) \mod 3=0\land \#_b(w) \mod 5=0\}\cup\\
&\cup&\{w\in\{a,b,c\}^*|\#_a(w) \mod 3=2\land \#_b(w) \mod 5=4\}
\end{eqnarray*}
has this
property, because the corresponding S.P. partitions on the set of its states
do not separate the final states in the sense of Definition~\ref{def:separate}.

It is also not so difficult to see that for any non-minimal automaton $A$ without unreachable
states, there exists a nontrivial AI- and
wAI-decomposition $(A_1,A_2)$ such that $A_1$ is the minimal automaton equivalent to $A$
and $A_2$ has only one state. This decomposition is obviously not state-identifying.

Figure~\ref{fig:relations} summarizes all the relationships
among the decomposition types that we have shown so far.

\begin{figure}
\begin{minipage}{0.4\textwidth}
\begin{displaymath}
\xymatrix{
    & ASB \ar[dl]\ar[dr]        \\
AI\ar@<1ex>[ur]|-{\times}\ar[dr]|-{min}\ar[dd]\ar@{.>}@<1ex>[dr]|-{\times}  & & SB\ar[dl]\ar@<1ex>[ul]|-{\times}    \\
    & SI\ar[dl]\ar@<1ex>[ur]|-{\times}              \\
 wAI\ar@{.>}@<1ex>[ur]|-{\times}                \\
}
\end{displaymath}
\end{minipage}
\begin{minipage}{0.6\textwidth}
Description:
\begin{description}
\item[$\xymatrix{A\ar[r] & B}$:]
every A-decomposition is also a B-decomposition
\item[$\xymatrix{A\ar@{.>}[r]|-{\times} & B}$:]
not every A-decomposition is also a B-decomposition
\item[$\xymatrix{A\ar[r]|-{\times} & B}$:]
there exists a DFA that has a nontrivial A-decomposition but does not have a nontrivial B-decomposition
\end{description}
\end{minipage}
\caption{Relationships between decomposition types of DFA}\label{fig:relations}
\end{figure}

Now we show that for the case of so-called perfect decompositions, some of the
types of decomposition mentioned coincide.

\begin{Def}
Let $t$ be a type of decomposition, $t\in\{ASB,SB,AI,SI,$ $wAI\}$. Let $A$ be a DFA having $n$
states, let $A_1$ and $A_2$ be DFAs having $k$ and $l$ states, respectively. We shall call
the pair $(A_1,A_2)$ a \emph{perfect $t$-decomposition} of $A$, if it forms a
$t$-decomposition of $A$ and $n=k\cdot l$.
\end{Def}

\begin{Thm}\label{thm:perfect-SI-SB}
Let $A$ be a DFA with no unreachable states and let $(A_1,A_2)$ be a pair of DFAs. Then
$(A_1,A_2)$ forms a perfect SI-decomposition of $A$ iff $(A_1,A_2)$ forms a
perfect SB-decomposition of $A$.
\end{Thm}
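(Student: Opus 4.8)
The plan is to prove the two implications separately, with essentially all of the work concentrated in the direction from SI to SB; the reverse direction will follow almost for free from the machinery already established.

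For the direction \emph{perfect SB $\Rightarrow$ perfect SI}, I would simply invoke Theorem~\ref{thm:SI-SB}, which already tells us that every SB-decomposition is an SI-decomposition. Since the property of a decomposition being \emph{perfect} depends only on the cardinalities $n$, $k$, $l$ of the state sets of $A$, $A_1$, $A_2$ (namely the equality $n = k\cdot l$), and these are unchanged when we merely reinterpret the same pair $(A_1,A_2)$ as an SI-decomposition, perfectness carries over verbatim. So this implication is immediate.

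The substance is in \emph{perfect SI $\Rightarrow$ perfect SB}. Here I would start from Lemma~\ref{lemma:SI-SB}, which reduces the goal to a single fact: the mapping $\beta$ furnished by the SI-decomposition must be injective on the set $P\subseteq K_1\times K_2$ of reachable pairs of states. The entire argument then becomes a counting argument that exploits the perfectness hypothesis $n = k\cdot l$. Concretely, let $\beta_r$ denote the restriction of $\beta$ to $P$. First I would observe that $\beta_r$ is \emph{surjective} onto $K$: since $A$ has no unreachable states, every $q\in K$ equals $\delta(q_0,w)$ for some word $w$, and then the SI-condition gives $\beta(\delta_1(q_1,w),\delta_2(q_2,w)) = q$ with $(\delta_1(q_1,w),\delta_2(q_2,w))\in P$. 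Next, $|P|\le |K_1\times K_2| = k\cdot l = n = |K|$. A surjection onto a set of size $n$ forces its domain to have at least $n$ elements, so $|P| = n$ and in fact $P = K_1\times K_2$ (every pair is reachable). But a surjection between two finite sets of equal cardinality is a bijection, hence injective. Thus $\beta$ is injective on $P$, and Lemma~\ref{lemma:SI-SB} concludes that $(A_1,A_2)$ is an SB-decomposition, which is perfect because $n = k\cdot l$ is unchanged.

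The one point that deserves care — and which I expect to be the only real obstacle — is the cardinality bookkeeping. One must be sure that $\beta_r$ maps \emph{onto} all of $K$ (this is exactly where the no-unreachable-states assumption is needed), and then that the upper bound $|P|\le k\cdot l$ together with perfectness pins $|P|$ down to exactly $n$, turning the surjection into a bijection. Once this squeeze is in place, injectivity on reachable pairs is forced and the remainder is a direct appeal to Lemma~\ref{lemma:SI-SB}.
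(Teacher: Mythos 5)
Your proof is correct and follows essentially the same route as the paper: the SB $\Rightarrow$ SI direction via Theorem~\ref{thm:SI-SB}, and the SI $\Rightarrow$ SB direction by showing every pair of states is reachable and $\beta$ is injective on reachable pairs, then invoking Lemma~\ref{lemma:SI-SB}. In fact, your explicit cardinality squeeze ($|P|\le k\cdot l=n$ together with surjectivity of $\beta_r$, which uses the no-unreachable-states hypothesis) spells out the counting that the paper's proof merely asserts.
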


\begin{proof}
One of the implications is a consequence of Theorem~\ref{thm:SI-SB}. As to the second
one, since $(A_1,A_2)$ forms a perfect SI-decomposition of $A$, each of the pairs of
states of $A_1$ and $A_2$ is reachable and each pair has to correspond to a different
state of $A$ in the mapping $\beta$, therefore $\beta$ is bijective and the theorem
follows from Theorem~\ref{lemma:SI-SB}.
\end{proof}

\begin{Cor}
Let $A$ be a minimal DFA and let $(A_1,A_2)$ be a pair of DFAs. Then
$(A_1,A_2)$ forms a perfect AI-decomposition of $A$ iff $(A_1,A_2)$ forms a
perfect ASB-decomposition of $A$.
\end{Cor}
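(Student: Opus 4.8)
The plan is to chain together the structural results already established, exploiting the fact that the defining equality $n=k\cdot l$ of a perfect decomposition is a statement purely about the numbers of states of $A$, $A_1$ and $A_2$, and is therefore inherited automatically whenever the same pair $(A_1,A_2)$ qualifies as a decomposition of a different type. First I would record that a minimal DFA has no unreachable states, so that all earlier theorems carrying this hypothesis apply to $A$.

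For the forward implication, suppose $(A_1,A_2)$ is a perfect AI-decomposition of $A$. Since $A$ is minimal, Theorem~\ref{thm:AI-SI-min} shows that $(A_1,A_2)$ is also an SI-decomposition; as the counts $n$, $k$, $l$ are unchanged, it is a \emph{perfect} SI-decomposition. Theorem~\ref{thm:perfect-SI-SB} then upgrades this to a perfect SB-decomposition, and in particular to an SB-decomposition. At this point $(A_1,A_2)$ is simultaneously an SB-decomposition and an AI-decomposition of $A$, so Theorem~\ref{thm:ASB=AI+SB} yields that it is an ASB-decomposition, again perfect since no state counts have changed.

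For the converse I would simply invoke the observation made earlier in this section that every ASB-decomposition is also an AI-decomposition, a direct consequence of the definition of acceptance and state behavior realization. Hence a perfect ASB-decomposition is at once a perfect AI-decomposition, the perfectness condition being literally the same equality $n=k\cdot l$ for both types.

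The proof is essentially a bookkeeping composition of four earlier results, so I do not expect a substantial obstacle. The one point requiring care is to make explicit that perfectness is preserved at every step: each time the pair $(A_1,A_2)$ is reinterpreted as a decomposition of a new type, the underlying automata $A$, $A_1$, $A_2$ — and hence the product relation $n=k\cdot l$ — remain fixed, so no separate verification of the size condition is ever needed.
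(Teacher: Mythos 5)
Your proposal is correct and takes essentially the same route as the paper, whose proof consists precisely of citing Theorem~\ref{thm:ASB=AI+SB}, Theorem~\ref{thm:AI-SI-min} and Theorem~\ref{thm:perfect-SI-SB}; you have merely spelled out the chaining (AI $\Rightarrow$ SI $\Rightarrow$ SB, then SB $+$ AI $\Rightarrow$ ASB, with the trivial converse) and made explicit the bookkeeping point that perfectness is preserved because the automata, and hence the equality $n=k\cdot l$, never change.
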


\begin{proof}
The claim follows from Theorem~\ref{thm:ASB=AI+SB}, Theorem~\ref{thm:AI-SI-min} and
Theorem~\ref{thm:perfect-SI-SB}.
\end{proof}

As a consequence of these facts, we can use the necessary and sufficient conditions stated in
Theorem~\ref{thm:SB-condition} to look for perfect AI- and SI-de\-com\-po\-si\-tions.

Now, let us inspect the relationship between decompositions of an automaton and the
decompositions of the corresponding minimal automaton.

\begin{Thm} \label{thm:minimal}
Let $A=(K,\Sigma,\delta,q_0,F)$ be a DFA and let $A_{\rm{min}}$ be a minimal DFA such that $L(A)=L(A_{\rm{min}})$. Let
$(A_1,A_2)$ be an SI-decomposition (AI-decomposition, wAI-decomposition) of $A$, then $(A_1,A_2)$ also forms a decomposition of $A_{\rm{min}}$ of the
same type.
\end{Thm}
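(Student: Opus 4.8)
The plan is to handle the three decomposition types separately, exploiting the fact that the AI- and wAI-conditions refer to $A$ only through its language, whereas the SI-condition refers to the states of $A$ and will therefore require relating them to the states of $A_{\rm{min}}$ via the canonical minimization homomorphism. First, in the AI-case the defining condition is $L(A)=L(A_1)\cap L(A_2)$, which mentions $A$ solely through $L(A)$; since $L(A_{\rm{min}})=L(A)$ by hypothesis, we immediately obtain $L(A_{\rm{min}})=L(A_1)\cap L(A_2)$, so $(A_1,A_2)$ is an AI-decomposition of $A_{\rm{min}}$. The wAI-case is equally direct: if $R\subseteq K_1\times K_2$ witnesses the wAI-decomposition of $A$, then for every $w\in\Sigma^*$ we have $R(\delta_1(q_1,w),\delta_2(q_2,w))\Leftrightarrow w\in L(A)\Leftrightarrow w\in L(A_{\rm{min}})$, so the very same relation $R$ witnesses a wAI-decomposition of $A_{\rm{min}}$.

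For the SI-case I would first introduce the canonical map $h$ from the reachable states of $A$ onto the states of $A_{\rm{min}}=(K_{\rm{min}},\Sigma,\delta_{\rm{min}},q_{\rm{min}},F_{\rm{min}})$. Restricting $A$ to its reachable states yields a DFA for $L(A)$, and $A_{\rm{min}}$ is obtained from it by merging Nerode-equivalent states; the resulting quotient map $h$ satisfies $h(q_0)=q_{\rm{min}}$ and $h(\delta(q,a))=\delta_{\rm{min}}(h(q),a)$ for every reachable state $q$ and every $a\in\Sigma$. An easy induction on $|w|$ then gives $h(\delta(q_0,w))=\delta_{\rm{min}}(q_{\rm{min}},w)$ for all $w\in\Sigma^*$, which is the property that makes the composition work.

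Given the mapping $\beta\colon K_1\times K_2\to K$ witnessing the SI-decomposition of $A$, I would define $\beta'\colon K_1\times K_2\to K_{\rm{min}}$ by composing with $h$. The key observation is that $\beta$ is constrained by the SI-condition only on pairs of the form $(\delta_1(q_1,w),\delta_2(q_2,w))$, where its value equals $\delta(q_0,w)$, a state that is always reachable in $A$; hence $h$ is defined on every value of $\beta$ that matters. Setting $\beta'(p_1,p_2)=h(\beta(p_1,p_2))$ whenever $\beta(p_1,p_2)$ is reachable, and $\beta'(p_1,p_2)=q_{\rm{min}}$ otherwise so that $\beta'$ is total, we compute for every $w$ that $\beta'(\delta_1(q_1,w),\delta_2(q_2,w)) = h(\beta(\delta_1(q_1,w),\delta_2(q_2,w))) = h(\delta(q_0,w)) = \delta_{\rm{min}}(q_{\rm{min}},w)$, which is exactly the SI-condition for $A_{\rm{min}}$. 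The only point requiring care, and the main (minor) obstacle, is that $\beta$ may send \emph{unreachable} pairs to unreachable states of $A$ on which $h$ is not defined; but since the SI-condition never evaluates $\beta'$ on such pairs, the arbitrary choice of $\beta'$ there is harmless, and this completes the argument.
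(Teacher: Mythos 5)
Your proposal is correct and follows essentially the same route as the paper: the AI- and wAI-cases are immediate from $L(A)=L(A_{\rm{min}})$, and the SI-case composes the given mapping with the canonical homomorphism from $A$ onto $A_{\rm{min}}$. Your explicit handling of unreachable states (extending the composed map arbitrarily there) is a minor point of extra care that the paper glosses over by simply asserting a total map $\beta\colon K\to K_{\rm{min}}$ with the required property on reachable states.
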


\begin{proof}
First, note that this theorem does not state that any of the decompositions is nontrivial.
To prove the statement for SI-decompositions,  suppose that $(A_1,A_2)$ is an SI-decomposition of $A$, thus there exists a mapping
$\alpha\colon K_1\times K_2\to K$ such that it holds
$(\forall w\in\Sigma^*);~
\alpha(\delta_1(q_1,w),\delta_2(q_2,w))=\delta(q_0,w)$,
where $\delta_i$ and $q_i$ are the transition function and the initial state of the
automaton $A_i$. Since $A_{\rm{min}}$ is the minimal automaton corresponding to $A$, there
exists some mapping $\beta\colon K \to K_{\rm{min}}$ such that
$(\forall w\in\Sigma^*);~
\beta(\delta(q_0,w))=\delta_{\rm{min}}(\beta(q_0),w)$,
where $\delta_{\rm{min}}$ is the transition function of $A_{\rm{min}}$ and $K_{\rm{min}}$ is the
set of states of $A_{\rm{min}}$. By the composition of these mappings we obtain
the mapping $\beta\circ\alpha\colon K_1\times K_2\to K_{\rm{min}}$,
which combines $A_1$ and $A_2$ into
$A_{\rm{min}}$ in the way that the definition of SI-decomposition requires.
For both the AI- and the wAI-decomposition, this statement is trivial, since
$L(A)=L(A_{\rm{min}})$.
%
%
%
\end{proof}

Based on the above theorem it thus suffices to inspect the SI- \mbox{(AI-, wAI-)} decomposability of the minimal
automaton accepting a given language, and if we show its undecomposability, we know that
the recognition of this language cannot be simplified using an advisor of the respective
type. However, this does not hold for SB- and ASB-decompositions, as exhibited by the
following example.
\begin{Ex}\label{ex1}
Let us consider the language $L=\{a^{2k}b^{2l}|k\geq 0, l\geq 1\}$. The minimal automaton
$A_{\rm{min}}=(K,\Sigma_L,\delta,a_0,\{a_0,b_0\})$ has its transition function defined by the
first transition diagram in Fig.\ref{fig:example}.
We can easily show that this automaton does not have any nontrivial SB- (and thus neither
ASB-) decomposition by enumerating its S.P. partitions.

\begin{figure}
\begin{minipage}{0.5\textwidth}
\begin{displaymath}
\xymatrix{
    a_1 \ar[r]^b\ar@/^0.5pc/[d]^a   &   R   \\
    a_0 \ar[r]^b\ar@/^0.5pc/[u]^a   &   b_1 \ar@/^0.5pc/[r]^b\ar[u]^a &
    b_0\ar@/^0.5pc/[l]^b\ar@/_1pc/[ul]_a
}
\end{displaymath}
\end{minipage}
\begin{minipage}{0.5\textwidth}
\begin{displaymath}
\xymatrix{
    a_1 \ar[r]^b\ar@/^0.5pc/[d]^a   &   R_1 \ar@/^0.5pc/[r]^b   &   R_0\ar@/^0.5pc/[l]^b    \\
    a_0 \ar[r]^b\ar@/^0.5pc/[u]^a   &   b_1 \ar@/^0.5pc/[r]^b\ar[u]^a &
    b_0\ar@/^0.5pc/[l]^b\ar[u]^a
}
\end{displaymath}
\end{minipage}
\caption{Transition functions of $A_{\rm{min}}$ and $A'$.}\label{fig:example}
\end{figure}
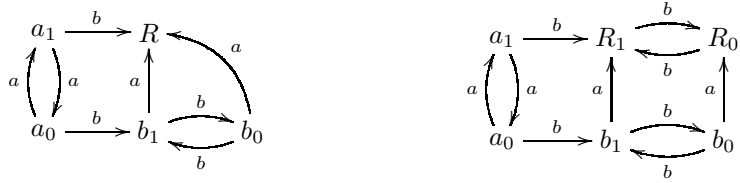

Now let us examine the automaton
$A'=(K',\Sigma_L,\delta',a_0,\{a_0,b_0\})$ with the transition function $\delta'$
defined by the second transition diagram in Fig.\ref{fig:example}.
Clearly, $L(A')=L(A_{\rm{min}})$, but by inspecting the lattice of S.P. partitions of $A'$, we
can find the pair $\pi_1=\{\{a_0\},\{a_1\},\{b_0,b_1\},\{R_0,R_1\}\}$ and
$\pi_2=\{\{a_0,a_1,b_0,R_0\},\{b_1,R_1\}\}$ such that $\pi_1\cdot\pi_2=0$ and they separate the
final states of $A'$. By Theorem~\ref{thm:SB-condition} we can use these partitions to
construct a nontrivial ASB- (and thus also SB-) decomposition of $A'$ formed by the automata $A_1$ and $A_2$
having two and four states, respectively.
Note that both $A_1$
and $A_2$ have less states than $A_{\rm{min}}$.
\end{Ex}

In the following theorem (inspired by a similar theorem in \cite{HartmanisStearns}) we
state a condition, under which the situation from the last example cannot
occur, i.e., under which any SB-decomposition of a DFA implies a (maybe simpler)
SB-decomposition of the equivalent minimal DFA. 

\begin{Thm}\label{thm:distributive}
Let $A=(K,\Sigma,\delta,q_0,F)$ be a deterministic finite automaton and let $A_{\rm{min}}=(K_{\rm{min}},\Sigma,\delta_{\rm{min}},q_{\rm{min}},F_{\rm{min}})$ be the minimal DFA such that $L(A)=L(A_{\rm{min}})$. Let
$(A_1,A_2)$ be a nontrivial SB-decomposition of $A$ consisting of automata having $k$ and $l$ states. If the lattice
of S.P. partitions of $A$ is distributive, then there exists an SB-decomposition of
$A_{\rm{min}}$ consisting of automata having $k'$ and $l'$ states, such that $k'\leq k$ and $l'\leq l$.
\end{Thm}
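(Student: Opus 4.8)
\emph{Proof proposal.}

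The plan is to turn the given SB-decomposition of $A$ into a pair of S.P. partitions on the states of $A$, to push these partitions down to the minimal automaton $A_{\rm{min}}$ by joining each of them with the minimization partition, and then to read off an SB-decomposition of $A_{\rm{min}}$ via the characterization in Theorem~\ref{thm:SB-condition}. The distributivity hypothesis will enter at exactly one point: to guarantee that after pushing down, the meet of the two partitions collapses to the zero partition of $A_{\rm{min}}$.

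First I would recover the partitions. Since $(A_1,A_2)$ is an SB-decomposition, there is an injective map $\alpha\colon K\to K_1\times K_2$ satisfying condition~(i) of Definition~\ref{def:realization}. Writing $\alpha(p)=(\alpha_1(p),\alpha_2(p))$, I define partitions $\tau_1,\tau_2$ on $K$ by $p\equiv_{\tau_i}q\Leftrightarrow\alpha_i(p)=\alpha_i(q)$. The commutation $\alpha_i(\delta(p,a))=\delta_i(\alpha_i(p),a)$ shows that each $\tau_i$ is an S.P. partition; injectivity of $\alpha$ forces $\tau_1\cdot\tau_2=0$; and the number of blocks of $\tau_i$ is $|\alpha_i(K)|$, hence at most $k$ and $l$ respectively. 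This is just the easy direction underlying Theorem~\ref{thm:SB-condition}.

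Next, let $\pi_{\rm{min}}$ be the S.P. partition on $K$ whose blocks are the Myhill--Nerode classes (states of $A$ accepting the same language), so that $A_{\rm{min}}$ is the quotient $A/\pi_{\rm{min}}$ and, as usual, S.P. partitions of $A_{\rm{min}}$ correspond bijectively to S.P. partitions of $A$ that are $\succeq\pi_{\rm{min}}$. I set $\sigma_i=\tau_i+\pi_{\rm{min}}$; these are S.P., satisfy $\sigma_i\succeq\pi_{\rm{min}}$, and since $\sigma_i\succeq\tau_i$ the number of blocks of $\sigma_i$ is at most that of $\tau_i$, hence at most $k$ and $l$. The crux is the identity $\sigma_1\cdot\sigma_2=(\tau_1+\pi_{\rm{min}})\cdot(\tau_2+\pi_{\rm{min}})=(\tau_1\cdot\tau_2)+\pi_{\rm{min}}=\pi_{\rm{min}}$, where the middle step is precisely the distributive law $(a+c)\cdot(b+c)=(a\cdot b)+c$, available because the S.P. lattice of $A$ is distributive, and the last step uses $\tau_1\cdot\tau_2=0$.

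Passing to the quotient by $\pi_{\rm{min}}$, the partitions $\sigma_1,\sigma_2$ induce S.P. partitions $\bar\sigma_1,\bar\sigma_2$ on $A_{\rm{min}}$ with $\bar\sigma_1\cdot\bar\sigma_2=0$ and with at most $k$ and $l$ blocks. The construction of an SB-decomposition from a pair of S.P. partitions with meet $0$ (Theorem~\ref{thm:SB-condition}) then produces an SB-decomposition of $A_{\rm{min}}$ into automata of sizes $k'$ and $l'$ equal to these block counts, so $k'\le k$ and $l'\le l$. The main obstacle I anticipate is the bookkeeping around reachability: the clean correspondence between S.P. partitions of $A_{\rm{min}}$ and those of $A$ above $\pi_{\rm{min}}$ presumes $A$ has no unreachable states, so I would first reduce to that case (or restrict $\pi_{\rm{min}}$ and the $\sigma_i$ to the reachable classes). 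It is also worth stressing that distributivity is genuinely needed: in a general S.P. lattice one only obtains $\sigma_1\cdot\sigma_2\succeq\pi_{\rm{min}}$, the meet need not collapse, and indeed Example~\ref{ex1} exhibits an SB-decomposable $A'$ whose minimal automaton admits no nontrivial SB-decomposition.
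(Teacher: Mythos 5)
Your proposal is correct and follows essentially the same route as the paper's proof: extract S.P. partitions $\tau_1,\tau_2$ with $\tau_1\cdot\tau_2=0$ from the SB-decomposition, join each with the kernel partition of the minimization map (your $\pi_{\rm{min}}$, the paper's $\rho$), apply the distributive law $(\tau_1+\pi_{\rm{min}})\cdot(\tau_2+\pi_{\rm{min}})=(\tau_1\cdot\tau_2)+\pi_{\rm{min}}=\pi_{\rm{min}}$, and project to $A_{\rm{min}}$ to obtain partitions with at most $k$ and $l$ blocks whose meet is $0$. Your explicit flagging of the reachability issue is a point the paper's proof silently glosses over (its map $f$ is only constrained on reachable states, and $\rho$ being S.P. implicitly needs $f$ to commute with transitions everywhere), so your version is if anything slightly more careful.
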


\begin{proof}
Since $A_{\rm{min}}$ is the minimal DFA such that $L(A)=L(A_{\rm{min}})$, there exists a
mapping $f\colon K\to K_{\rm{min}}$ such that
$(\forall w\in\Sigma^*);~f(\delta(q_0,w))=\delta_{\rm{min}}(q_{\rm{min}},w)$.
Using the mapping $f$, let us define a partition $\rho$ on the set of states of $A$ by
$p\equiv_{\rho}q \Leftrightarrow f(p)=f(q)$. Clearly, $\rho$ is
an S.P. partition.

Since $(A_1,A_2)$ is a nontrivial SB-decomposition of $A$, we can use it to obtain S.P. partitions
$\pi_1$ and $\pi_2$ on the set of states of $A$ such that $\pi_1\cdot\pi_2=0$. Let us
define new partitions $\pi_1'$ and $\pi_2'$ on the set of states of $A_{\rm{min}}$ by
$f(p)\equiv_{\pi_i'}f(q)\Leftrightarrow p\equiv_{\rho+\pi_i} q$. Since it
holds that $\rho+\pi_i\preceq\rho$, this definition does not depend on the
choice of the states $p$ and $q$. It holds that $|\pi_i'|=|\rho+\pi_i|\leq|\pi_i|$,
therefore if we prove that $\pi_1'$ and $\pi_2'$ are S.P. partitions and
$\pi_1'\cdot\pi_2'=0$, we can use them to construct the desired decomposition.

The fact that $\pi_i'$ is an S.P. partition on the set of states of $A_{\rm{min}}$ is a trivial
consequence of the fact that $\rho+\pi_i$ is an S.P. partition on the set of states of
$A$. We need to prove that $\pi_1'\cdot\pi_2'=0$. Let us assume that $p'$ and $q'$ are
states of $A_{\rm{min}}$ such that $p'\equiv_{\pi_1'\cdot\pi_2'}q'$ and $p,q$ are some states of $A$
such that $f(p)=p'$ and $f(q)=q'$. Then $p'\equiv_{\pi_1'}q'$ and
$p'\equiv_{\pi_2'}q'$, and by definition of $\pi_i'$ we get
$p\equiv_{\rho+\pi_1}q$ and $p\equiv_{\rho+\pi_2}q$, which is equivalent to
$p\equiv_{(\rho+\pi_1)\cdot(\rho+\pi_2)}q$. Since the lattice of all S.P. partitions of
$A$ is distributive, we have
$(\rho+\pi_1)\cdot(\rho+\pi_2)=\rho+(\pi_1\cdot\pi_2)=\rho+0=\rho$,
therefore $p\equiv_{\rho}q$, which by definition of $\rho$ implies that $f(p)=f(q)$, in
other words $p'=q'$. Hence $\pi_1'\cdot\pi_2'=0$.
\end{proof}

\section{Degrees of Decomposability}\label{sec:degrees}

It is easy to see that for each type of decomposition, there exist undecomposable
regular languages (e.g. $L^{(n)}=\{a^k | k\geq n-1\}$ is wAI-undecomposable for each
$n\in\mathbb{N}$). There also
exist regular languages, that are perfectly decomposable in
each way (e.g. $L^{(k,l)}=\{w\in\{a,b\}^* | \#_a(w)\textrm{~mod~} k=0 \land \#_b(w)\textrm{~mod~}
l=0\}$ has a perfect ASB-decomposition for all $k,l\geq 2$). We shall now investigate whether all values between these two
limits can be achieved.

\begin{Def}
Let $A$ be a DFA, let $(A_1,A_2)$ be its nontrivial SB- (ASB-) decomposition with the
corresponding S.P. partitions $\pi_1$ and $\pi_2$. We shall call this decomposition
\emph{redundant}, if there exist S.P. partitions $\pi_1'\succeq \pi_1$ and $\pi_2'\succeq
\pi_2$ such that at least one of these inequalities is strict, but it still holds
$\pi_1'\cdot\pi_2'=0$ (and $\pi_1'$ and $\pi_2'$ separate the final states of $A$).
\end{Def}

\begin{Lemma}\label{lemma:rectangle}
For each $r,s\in\mathbb{N}$, $r,s\geq 2$,  there exists a minimal DFA $A$ consisting of $r.s$ states
and having only one nontrivial nonredundant SB-decomposition (ASB-decomposition) up to the
order of automata, consisting of automata having $r$ and $s$ states.
\end{Lemma}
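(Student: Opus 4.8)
The plan is to realize $A$ as a suitably rigidified product of two small \emph{simple} automata (automata whose only S.P.\ partitions are $0$ and $1$), so that the lattice of S.P.\ partitions of $A$ collapses to exactly four elements; Theorem~\ref{thm:SB-condition} then forces a unique decomposition. Take the state set $K=R\times S$ with $R=\{0,\dots,r-1\}$ and $S=\{0,\dots,s-1\}$. Over disjoint alphabets $\Sigma_B$ and $\Sigma_C$ I would build strongly connected minimal simple automata $B$ on $R$ and $C$ on $S$; concretely one may let $\Sigma_B=\{a,b\}$ with $a$ the $r$-cycle $i\mapsto i+1 \bmod r$ and $b$ fixing every state except $0\mapsto 1$, and take $F_B=\{0\}$ (symmetrically for $C$). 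The S.P.\ partitions of a single $r$-cycle are exactly the mod-$d$ partitions for $d\mid r$, and $b$ destroys every nontrivial one, so $B$ (and likewise $C$) has only the trivial S.P.\ partitions $0$ and $1$; both are clearly minimal with accepting state $0$. Now let $\Sigma=\Sigma_B\cup\Sigma_C\cup\{c,d\}$ and define $\delta$ on $K$ by letting $\Sigma_B$ act via $\delta_B$ on the first coordinate (fixing the second), $\Sigma_C$ act via $\delta_C$ on the second (fixing the first), $c\colon(i,j)\mapsto(0,j)$ and $d\colon(i,j)\mapsto(i,0)$, with $q_0=(0,0)$ and $F=\{(0,0)\}$.

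Writing $\pi_R$ for the partition ``equal first coordinate'' ($r$ blocks) and $\pi_S$ for ``equal second coordinate'' ($s$ blocks), a direct check shows that both are S.P., that $\pi_R\cdot\pi_S=0$, and that they separate $F=\{0\}\times\{0\}$ (a combinatorial rectangle). By Theorem~\ref{thm:SB-condition} the two quotient automata, of sizes $r$ and $s$, then form a nontrivial ASB- (hence SB-) decomposition of $A$.

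The heart of the argument is to show these are the \emph{only} nontrivial S.P.\ partitions, i.e.\ that every S.P.\ partition $\theta$ of $A$ is a product $\sigma\times\tau$, meaning $(i,j)\equiv_\theta(i',j')\Leftrightarrow i\equiv_\sigma i'\land j\equiv_\tau j'$, with $\sigma$ an S.P.\ partition of $B$ and $\tau$ one of $C$. Define $\sigma$ on $R$ by $i\,\sigma\,i'\Leftrightarrow(i,0)\,\theta\,(i',0)$ and $\tau$ on $S$ by $j\,\tau\,j'\Leftrightarrow(0,j)\,\theta\,(0,j')$. The reset letters give $\theta\preceq\sigma\times\tau$: from $(i,j)\,\theta\,(i',j')$ the letter $d$ yields $(i,0)\,\theta\,(i',0)$ and $c$ yields $(0,j)\,\theta\,(0,j')$. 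For the reverse inclusion I would use strong connectivity: if $i\,\sigma\,i'$ and $j\,\tau\,j'$, transport $(i,0)\,\theta\,(i',0)$ by a word of $\Sigma_C^*$ reaching $j$ to get $(i,j)\,\theta\,(i',j)$, transport $(0,j)\,\theta\,(0,j')$ by a word of $\Sigma_B^*$ reaching $i'$ to get $(i',j)\,\theta\,(i',j')$, and compose by transitivity. That $\sigma,\tau$ are S.P.\ is immediate from closure of $\theta$ under $\Sigma_B$- and $\Sigma_C$-letters. This step is the delicate one: the reset letters $c,d$ are exactly what forces the product (``rectangular'') shape and rules out the diagonal S.P.\ partitions present in the bare $\mathbb{Z}_r\times\mathbb{Z}_s$ group automaton, while still being chosen to keep $\pi_R,\pi_S$ S.P.

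Since $B$ and $C$ have only trivial S.P.\ partitions, the S.P.\ partitions of $A$ are precisely $0,\pi_R,\pi_S,1$. Hence $(\pi_R,\pi_S)$ is the unique pair of nontrivial S.P.\ partitions with product $0$, so by Theorem~\ref{thm:SB-condition} it is the unique nontrivial SB-decomposition (a fortiori the unique nonredundant one, as neither factor can be enlarged without making the product nonzero), and since it separates the finals it is also the unique ASB-decomposition; its components have $r$ and $s$ states. It remains to verify that $A$ is minimal: every state is reachable from $(0,0)$ using $a,a'$, and two distinct states are separated by first resetting (via $c$ or $d$) the coordinate on which they agree to $0\in F$ and then applying a word of the appropriate factor alphabet distinguishing the differing coordinate in the minimal automaton $B$ or $C$. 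This yields a minimal DFA on $r\cdot s$ states with the required unique decomposition.
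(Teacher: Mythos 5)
Your proof is correct, and it reaches the lemma by a genuinely different construction than the paper's. The paper stays on the two-letter alphabet $\{a,b\}$ and uses a ``grid'' automaton $A_{r,s}$ whose letters move along the two coordinates and saturate at the boundary, with the single corner state $q_{r-1,s-1}$ accepting; its proof then runs through the Hartmanis--Stearns machinery of minimal S.P.\ partitions $\pi^m_{p,t}$, classifying them by a case analysis on the relative position of $p$ and $t$, and concludes that every other nontrivial SB-decomposition is given by partitions below the row/column pair and is therefore redundant. You instead pay with a larger alphabet (six letters) to rigidify the congruence lattice outright: the reset letters $c,d$ force every S.P.\ partition $\theta$ to factor as $\sigma\times\tau$ (your two transport arguments, $\theta\preceq\sigma\times\tau$ via the resets and the converse via words in $\Sigma_B^*$ and $\Sigma_C^*$, are sound, as is the check that $b$ destroys every nontrivial congruence of the $r$-cycle), so simplicity of the factors collapses the lattice to $\{0,\pi_R,\pi_S,1\}$ and uniqueness is immediate. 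Note that you actually prove something stronger than the lemma requires: your $A$ has only \emph{one} nontrivial SB-decomposition altogether, whereas the paper's $A_{r,s}$ has many, all but one redundant. What the paper's approach buys is a binary alphabet and a concrete language; what yours buys is a transparent structural reason for uniqueness (a product congruence lattice) that would generalize to any pair of simple, reachable, minimal factor automata glued with resets. Both arguments rely equally on the convention, implicit in Theorem~\ref{thm:SB-condition} and in the definition of redundancy, that nontrivial SB-decompositions are identified with their induced pairs of S.P.\ partitions, so your appeal to that correspondence is no weaker than the paper's. One cosmetic point: in your minimality argument, two states may differ in \emph{both} coordinates, in which case there is no ``coordinate on which they agree''; the fix is trivial (reset either coordinate and distinguish the other inside $B$ or $C$), but the sentence should be rephrased accordingly.
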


\begin{proof}
Let us study the minimal automaton $A_{r,s}=(K,\Sigma,\delta,q_{0,0},F)$ defined by
$K=\{q_{i,j}|i\in\{0,\ldots,r-1\},j\in\{0,\ldots,s-1\}\}$, $F=\{q_{r-1,s-1}\}$ and the
transition function $\delta$ defined by
\begin{eqnarray*}
\delta(q_{i,j},a)&=&q_{i+1,j} \rm{~for~} i\in\{0, \ldots,r-2\},j\in\{0, \ldots,s-1\}    \\
\delta(q_{r-1,j},a)&=&q_{r-1,j} \rm{~for~} j\in\{0, \ldots,s-1\}    \\
\delta(q_{i,j},b)&=&q_{i,j+1} \rm{~for~}  i\in\{0, \ldots,r-1\},j\in\{0, \ldots,s-2\}   \\
\delta(q_{i,s-1},b)&=&q_{i,s-1} \rm{~for~} i\in\{0, \ldots,r-1\}.
\end{eqnarray*}
To inspect the SB-decompositions of $A_{r,s}$, let us study the S.P. partitions on the set of
its states. From the method for generating all S.P. partitions of an automaton that is described
in \cite{HartmanisStearns}, we know that each nontrivial S.P. partition can be obtained as
a sum of some partitions $\pi_{p,t}^m$, where $\pi_{p,t}^m$ denotes the minimal S.P.
partition such that it does not distinguish between states $p$ and $t$, i.e., they belong
into the same block. Let us determine $\pi_{p,t}^m$ for various states $p$ and $t$ of $A_{r,s}$.

First, let us consider the case of $\pi_{p,t}^m$ such that $p=q_{i,j}$, $t=q_{i',j'}$ and
both inequalities $i<i'$ and $j<j'$ hold. Since $q_{i,j}\equiv_{\pi} q_{i',j'}$,
$\delta(q_{i,j},a^{i'-i}b^{j'-j})=q_{i',j'}$ and
$\delta(q_{i',j'},a^{i'-i}b^{j'-j})=q_{2i'-i,2j'-j}$ (if $2i'-i<r$ and $2j'-j<s$), as a consequence of the
substitution property of $\pi$, we obtain $q_{i,j}\equiv_{\pi} q_{2i'-i,2j'-j}$. By
applying this argument a finite number of times (keeping in mind the construction of $A_{r,s}$),
we obtain $q_{i,j}\equiv_{\pi} q_{r-1,s-1}$. Now let $k\in\{i,\ldots,r-1\}$ and let
$l\in\{i,\ldots,s-1\}$. Then $\delta(q_{i,j},a^{k-i}b^{l-j})=q_{k,l}$ and
$\delta(q_{i',j'},a^{k-i}b^{l-j})=q_{k+i'-i,l+j'-j}$ (if such states exist),
therefore $q_{k,l}\equiv_{\pi} q_{k+i'-i,l+j'-j}$. Again, we can use the same argument to
show that $q_{k,l}\equiv_{\pi} q_{r-1,s-1}$. Therefore, for this type of $\pi=\pi_{p,t}^m$, we
have $q_{k,l}\equiv_{\pi} q_{k',l'}$ for all $k,l,k',l'$ such that $i\leq k,k'<r$ and
$j\leq l,l'<s$.

Now let us consider the case of $\pi_{p,t}^m$ such that $p=q_{i,j}$, $t=q_{i',j'}$ and
it holds $i>i'$ and $j<j'$. Since $q_{i,j}\equiv_{\pi} q_{i',j'}$,
$\delta(q_{i,j},a^{r-1-i}b^{s-1-j'})=q_{r-1,s-1-(j'-j)}$ and
$\delta(q_{i',j'},a^{r-1-i}b^{s-1-j'})=q_{r-1-(i-i'),s-1}$, as a consequence of
the substitution property of $\pi$, we have $q_{r-1,s-1-(j'-j)}\equiv_{\pi}
q_{r-1-(i-i'),s-1}$. By exploiting the substitution property again on this equivalence,
using the words $a^{i-i'-1}$, $b^{j'-j-1}$ and $b^{j'-j}$, we obtain
$q_{r-2,s-1}\equiv_{\pi}q_{r-1,s-1}\equiv_{\pi}q_{r-2,s-2}$. Therefore in this case, no
such  $\pi_{p,t}^m$ partition can distinguish between states $q_{r-2,s-1}$, $q_{r-1,s-1}$
and $q_{r-2,s-2}$.

The last case to consider is the case of $\pi_{p,t}^m$ such that $p=q_{i,j}$,
$t=q_{i',j'}$ and it holds $i=i'$ (the case $j=j'$ is analogous). Without loss of
generality, we can assume that $j<j'$. Now, using the same arguments as in the first case,
we can show that $q_{i,l}\equiv_{\pi}q_{i,l'}$ for all $l,l'$ such that $j\leq
l,l'<s$. Therefore for each given $k$ such that $i\leq k<r$, it holds that
$q_{k,l}\equiv_{\pi}q_{k,l'}$ and all of the states not mentioned in this equivalence form
separate blocks of $\pi_{p,t}^m$.

It is easy to verify that one nontrivial ASB-decomposition of $A_{r,s}$ is given by the S.P. partitions
\begin{eqnarray*}
\pi_1&=&\{\{q_{0,0},\ldots,q_{0,s-1}\},\{q_{1,0},\ldots,q_{1,s-1}\},\ldots,\{q_{r-1,0},\ldots,q_{r-1,s-1}\}\}
\enspace\rm{and}\\
\pi_2&=&\{\{q_{0,0},\ldots,q_{r-1,0}\},\{q_{0,1},\ldots,q_{r-1,1}\},\ldots,\{q_{0,s-1},\ldots,q_{r-1,s-1}\}\}
\end{eqnarray*}
Now we show that any other SB-decomposition of $A_{r,s}$ is given by S.P. partitions preceding
to $\pi_1$ and $\pi_2$ in the partial order $\preceq$ and therefore is redundant.

Indeed, notice that none of the $\pi_{p,t}^m$ partitions of the first and the second
discussed type can distinguish between any of the states $q_{r-2,s-1}$, $q_{r-1,s-1}$
and $q_{r-2,s-2}$, therefore no sum of them can, either. For the partitions of the third
type, it holds either $q_{r-2,s-1}\equiv_{\pi}q_{r-1,s-1}$ or
$q_{r-1,s-1}\equiv_{\pi}q_{r-2,s-2}$, therefore it will take two partitions to distinguish
between these three states. Hence any nontrivial SB-decomposition is determined by two
S.P. partitions, both of which must be of the third type. But it is easy to see that for
any partition $\pi$ of this type it holds either $\pi\preceq\pi_1$ or $\pi\preceq\pi_2$.
\end{proof}

\begin{Def}\label{def:extension}
Let $A=(K,\Sigma,\delta,q_0,F)$ be a deterministic finite automaton, let
$K\cap\{p_0,p_1,\ldots,p_{k-1}\}=\emptyset$ and let $c$
be a new symbol not included in $\Sigma$. We shall define a \emph{$k$-extension} $A'$ of the automaton $A$ by the
following construction: $A'=(K\cup\{p_0,p_1,\ldots,p_{k-1}\},\Sigma\cup\{c\},\delta',p_0,F)$,
where the transition function $\delta'$ is defined as follows:
\begin{eqnarray*}
(\forall q\in K)~(\forall a\in\Sigma);~~~~\delta'(q,a)&=&\delta(q,a)\\
(\forall q\in K);~~~~                    \delta'(q,c)&=&q\\
(\forall p\in \{p_0,p_1,\ldots,p_{k-1}\})~(\forall a\in\Sigma);~~~~\delta'(p,a)&=&p\\
(\forall i\in \{0,1,\ldots,k-2\});~~~~                        \delta'(p_i,c)&=&p_{i+1}\\
\delta'(p_{k-1},c)&=&q_0 .
\end{eqnarray*}
\end{Def}


\begin{Lemma}\label{lemma:extension}
Let $A$ be a DFA consisting of $n$ states, all of which are reachable. Let $A'$ be its
$k$-extension. Then $A$ has a nontrivial nonredundant SB-decomposition (ASB-decomposition)
consisting of automata having $r$ and $s$ states iff
$A'$ has a nontrivial nonredundant decomposition of the same type, consisting of automata
having $k+r$ and $k+s$ states.
\end{Lemma}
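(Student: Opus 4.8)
The plan is to set up an explicit bijection between the nontrivial SB-decompositions of $A$ and those of $A'$ that adjoins exactly $k$ blocks to each of the two S.P. partitions, and then to check that this bijection restricts to the nonredundant ones and respects the separation of final states. By Theorem~\ref{thm:SB-condition}, a nontrivial SB-decomposition of $A$ (resp.\ of $A'$) is the same datum as a pair of nontrivial S.P. partitions with product $0$, the two block counts being the state counts of $A_1$ and $A_2$. My map $\Phi$ sends a pair $(\pi_1,\pi_2)$ on $K$ to the pair $(\pi_1',\pi_2')$ on $K\cup\{p_0,\ldots,p_{k-1}\}$ obtained by adjoining the $k$ new states as singleton blocks to each $\pi_i$; the candidate inverse $\Psi$ restricts a pair on the states of $A'$ back to $K$. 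The crux is to show $\Psi$ is well defined, i.e.\ that in the relevant decompositions of $A'$ the new states are forced to be singletons.

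The main obstacle, and the one technical lemma I would isolate first, is a structural claim about S.P. partitions of $A'$: for every S.P. partition $\pi'$ of $A'$, either each $p_j$ is a singleton block of $\pi'$ disjoint from $K$, or the whole of $K$ lies in a single block of $\pi'$. I would prove it by S.P.-closure along the $c$-chain $p_0\to p_1\to\cdots\to p_{k-1}\to q_0$. If two distinct $p_i\equiv_{\pi'}p_j$ with $i<j$ and $d=j-i$, then applying the substitution property to $c^{\,k-1-j}$ and then to one further $c$ gives $p_{k-d}\equiv_{\pi'}q_0$. If instead $p_j\equiv_{\pi'}q$ for some $q\in K$, then applying $c$ gives $p_{j+1}\equiv_{\pi'}q$ when $j<k-1$ (so $p_j\equiv_{\pi'}p_{j+1}$, reducing to the previous case) and $p_{k-1}\equiv_{\pi'}q_0$ when $j=k-1$. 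Either way we reach $p_m\equiv_{\pi'}q_0$ for some $m$; since $p_m$ self-loops under every $a\in\Sigma$ while $q_0$ moves by $\delta$, the substitution property yields $p_m\equiv_{\pi'}\delta(q_0,w)$ for all $w\in\Sigma^*$, and as every state of $A$ is reachable this places all of $K$ into one block. The contrapositive is exactly the stated dichotomy.

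With this lemma the bookkeeping is routine. If $(\pi_1',\pi_2')$ is a nontrivial SB-decomposition of $A'$, neither partition can collapse $K$: were $\pi_1'$ to do so, $\pi_1'\cdot\pi_2'=0$ would force $\pi_2'$ to be discrete on $K$, hence, by the lemma, discrete on the $p_j$'s as well, so $\pi_2'=0$, contradicting nontriviality. Thus both $\pi_i'$ keep the $p_j$'s as singletons, so $\Psi(\pi_1',\pi_2')=(\pi_1'|_K,\pi_2'|_K)$ is a well-defined pair of S.P. partitions on $A$ with product $0$ and block counts $r=|\pi_1'|-k$, $s=|\pi_2'|-k$. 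Conversely, for $(\pi_1,\pi_2)$ on $A$ the extended $\pi_i'$ are S.P. on $A'$ (the adjoined singletons behave correctly under $c$ since $\delta'(p_{k-1},c)=q_0$ and states of $K$ are $c$-fixed) and have product $0$. One checks $\Phi$ and $\Psi$ are mutually inverse and that nontriviality (equivalently $2\le r,s\le n-1$, i.e.\ $2\le k+r,k+s\le n+k-1$) transfers in both directions, yielding the claimed state counts $k+r$ and $k+s$.

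Finally I would confirm that the bijection preserves nonredundancy and the ASB refinement. For nonredundancy: a strict coarsening $\tilde\pi_1\succ\pi_1$ on $A$ with $\tilde\pi_1\cdot\pi_2=0$ extends through $\Phi$ to a strict coarsening of $\pi_1'$ witnessing redundancy of $(\pi_1',\pi_2')$; conversely, any strict coarsening $\tilde\pi_1'\succ\pi_1'$ witnessing redundancy of $(\pi_1',\pi_2')$ is itself nontrivial with $\tilde\pi_1'\cdot\pi_2'=0$, so by the lemma it too keeps the $p_j$'s as singletons, and restricting to $K$ gives a strict coarsening of $\pi_1$ witnessing redundancy of $(\pi_1,\pi_2)$. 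Hence redundancy corresponds in both directions. For the ASB case, note $F\subseteq K$ and no $p_j$ lies in $F$; since the $p_j$'s form singleton (non-final) blocks in both partitions, $\pi_1',\pi_2'$ separate the final states of $A'$ in the sense of Definition~\ref{def:separate} if and only if $\pi_1,\pi_2$ separate the final states of $A$. Thus the same bijection carries nonredundant ASB-decompositions to nonredundant ASB-decompositions. The hard part is the structural lemma; everything after it is verification.
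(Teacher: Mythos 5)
Your proposal is correct and follows essentially the same route as the paper: the same structural dichotomy for S.P.\ partitions of the $k$-extension (proved by chasing the substitution property along the $c$-chain $p_0\to\cdots\to p_{k-1}\to q_0$ and then using the $\Sigma$-self-loops plus reachability to collapse $K$), the same correspondence that adjoins or strips the $k$ singleton blocks, and the same transfer arguments for nontriviality, nonredundancy, and separation of final states. The only differences are cosmetic (you organize it as an explicit bijection $\Phi,\Psi$ and compute $p_{k-d}\equiv q_0$ in one step where the paper iterates), so there is nothing substantive to add.
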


\begin{proof}
We will try to inspect S.P. partitions on the set of states of $A'$, using the notation
from Definition~\ref{def:extension}. Let us assume that $\pi'$ is an S.P. partition on the
set of states of $A'$ such
that $p_i$ and $p_j$ are in the same block of $\pi'$; \hbox{$i,j\in\{0,1,\ldots,k-1\}$}.
As a consequence of the S.P. property, if $i,j<k-1$ then also $p_{i+1}$ and $p_{j+1}$ are in
the same block of $\pi'$, because $\delta'(p_i,c)=p_{i+1}$ and $\delta'(p_j,c)=p_{j+1}$. By
applying this argument a finite number of times, we can show that there exists some $l\in
\{0,1,\ldots,k-2\}$ such that $p_l\equiv_{\pi'} p_{k-1}$, and using the argument once more,
we obtain $p_{l+1}\equiv_{\pi'} q_{0}$. However, it holds $\delta'(p_l,a)=p_l$ for all
$a\in\Sigma$, hence $p_l\equiv_{\pi'} \delta'(q_0,w)$ for all $w\in\Sigma^*$. Since
all of the states of $A'$ are reachable, we have $p_l\equiv_{\pi'} q$ for all $q\in K$. Thus
such a partition cannot distinguish between the original states of the automaton $A$.

Now let us suppose that $\pi'$ is an S.P. partition on the set of states of $A'$ such
that for some \hbox{$i\in\{0,1,\ldots,k-1\}$}, $p_i\equiv_{\pi'} q$ for some $q$ in $K$.
Then it also holds that $p_i\equiv_{\pi'} p_{i+1}$, because
$\delta(p_i,c)=p_{i+1}$ and $\delta(q,c)=q$. But we have
already shown that $p_i\equiv_{\pi'} p_{i+1}$ implies
that all of the states in $K$ are equivalent modulo $\pi'$, thus this S.P. partition cannot
distinguish between the states of $A$, either.

>From these observations it follows that if $\pi'$ is any S.P. partition on the set of
states of $A'$ such
that the states of $A$ are not all equivalent modulo $\pi'$, then $\pi'$
must also contain $k$ blocks, each of which contains only one state
$p_i$, where $i\in\{0,1,\ldots,k-1\}$. Now we can prove the equivalence stated in the theorem.

Let $A$ have an SB-decomposition consisting of $r$ and $s$ states. Then there exist S.P.
partitions $\pi_1$ and $\pi_2$ on the set of states of $A$ having $r$ and $s$ blocks, such that $\pi_1\cdot\pi_2=0$.
Let us now construct new partitions $\pi_1'$ and $\pi_2'$ on the set of states of $A'$ by
$\pi_1'=\pi_1\cup\left\{\{p_0\},\{p_1\},\ldots,\{p_{k-1}\}\right\}$ and
$\pi_2'=\pi_2\cup\left\{\{p_0\},\{p_1\},\ldots,\{p_{k-1}\}\right\}$.
Obviously, $\pi_1'$ and $\pi_2'$ have substitution property, because for the states in $K$
this property is
inherited from $\pi_1$ and $\pi_2$, and the new states $p_0, p_1, \ldots, p_{k-1}$ cannot
violate this property either, because each of these states belongs to a separate block in $\pi_1'$
and $\pi_2'$, making the substitution property hold trivially. Neither do the
new $c$-moves defined on the states from $K$ violate the substitution property. Finally, it holds that
\mbox{$\pi_1'\cdot\pi_2'=0$}. To see this, note that for a state $q\in K$, it holds
$[q]_{\pi_1'\cdot\pi_2'}=[q]_{\pi_1\cdot\pi_2}=\{q\}$, since $\pi_1\cdot\pi_2=0$. For a
state $q\in K' - K$, $[q]_{\pi_i'}=\{q\}$ for $i\in\{1,2\}$ thus
$[q]_{\pi_1'\cdot\pi_2'}=\{q\}$, too. Hence each state of $A'$ belongs to a separate block
of $\pi_1'\cdot\pi_2'$, which implies $\pi_1'\cdot\pi_2'=0$. Therefore $\pi_1'$ and
$\pi_2'$ induce an SB-decomposition of $A'$.
It is also easy to see that if $\pi_1$ and $\pi_2$ separate the final states of $A$, then
also $\pi_1'$ and $\pi_2'$ separate the final states of $A'$, making the induced
decomposition an ASB-decomposition.

On the other hand, let us now assume that $A'$ has an SB-decomposition and $\pi_1'$ and $\pi_2'$ are the S.P.
partitions on $K'$ that induce this decomposition, thus \mbox{$\pi_1'\cdot\pi_2'=0$}. From the
observations made in the beginning of this proof, we know that any S.P. partition that
can distinguish between the states in $K$ in any way, must contain each of the states $p_0,
p_1\ldots p_{k-1}$ in a separate block containing only this state. As
$\pi_1'\cdot\pi_2'=0$, for all $q_1,q_2\in K$, at least one of these partitions must
distinguish between these states, i.e., $[q_1]_{\pi_i'}\not =[q_2]_{\pi_i'}$. If one of the
partitions distinguished between all such pairs, it would imply that this partition must
contain a separate block for each one of the states in $K'$, thus becoming a trivial
partition $0$, resulting in a trivial decomposition. Therefore both $\pi_1'$ and
$\pi_2'$ have to distinguish between some pair of states from $K$, which implies that they
both contain a separate block for each of the states $p_0,p_1\ldots p_{k-1}$ containing no
other state. By removing these $k$ blocks from $\pi_1'$ and $\pi_2'$, we obtain new
partitions $\pi_1$ and $\pi_2$ on
the set $K$, such that
$\pi_1=\pi_1' - \left\{\{p_0\},\{p_1\},\ldots,\{p_{k-1}\}\right\}$ and
$\pi_2=\pi_2' - \left\{\{p_0\},\{p_1\},\ldots,\{p_{k-1}\}\right\}$.
These partitions preserve the substitution property, since $(\forall a\in\Sigma)(\forall q\in
K)$: $\delta(q,a)\in K$ and $\pi_1'$ and $\pi_2'$ were S.P. partitions. It also holds
$\pi_1\cdot\pi_2=0$, as for all $q_1,q_2\in K$, $q_1\equiv_{\pi_1\cdot\pi_2} q_2$ implies
$q_1\equiv_{\pi_1'\cdot\pi_2'} q_2$ and that implies $q_1=q_2$. So $\pi_1$ and $\pi_2$
induce an SB-decomposition of $A$. As $\pi_1'$ and $\pi_2'$ were nontrivial, so are
$\pi_1$ and $\pi_2$ and the obtained decomposition. It is again easy to see that if
$\pi_1'$ and $\pi_2'$ separate the final states of $A'$, then also $\pi_1$ and $\pi_2$
must separate the final states of $A$.

The described relationship between the S.P. partitions on the set of states of $A$ and the
corresponding S.P. partitions on $A'$ also implies, that each decomposition of A is
nonredundant iff the corresponding decomposition of $A'$ is nonredundant, too.
\end{proof}

Since a $k$-extension of a minimal DFA is again a minimal DFA, we can combine
the lemmas to obtain the following theorem.

\begin{Thm}
Let $n\in\mathbb{N}$ be such that $n=k+r.s$, where $r,s, k\in\mathbb{N}$, $r,s\geq 2$. Then there exists a
minimal DFA $A$ consisting of $n$ states, such
that it has only one nontrivial nonredundant SB-decomposition (ASB-decomposition) up to
the order of the automata in the decomposition, and this decomposition consists of
automata with $k+r$ and $k+s$ states.
\end{Thm}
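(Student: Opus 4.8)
The plan is to combine Lemma~\ref{lemma:rectangle} and Lemma~\ref{lemma:extension} directly, using the automaton $A_{r,s}$ constructed in the former as the base for the $k$-extension construction of the latter. First I would take the minimal DFA $A_{r,s}$ guaranteed by Lemma~\ref{lemma:rectangle}: it has exactly $r\cdot s$ states, all of which are reachable (being minimal), and it possesses, up to the order of the automata, a unique nontrivial nonredundant SB-decomposition into automata with $r$ and $s$ states. I would then form its $k$-extension $A'_{r,s}$ as in Definition~\ref{def:extension}, which adjoins exactly $k$ new states $p_0,\ldots,p_{k-1}$, giving $k + r\cdot s = n$ states in total.

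Next I would invoke the observation stated just before the theorem, namely that a $k$-extension of a minimal DFA is again minimal, so $A := A'_{r,s}$ is the required minimal $n$-state DFA. The heart of the argument is the uniqueness clause, and here I would lean on Lemma~\ref{lemma:extension}. That lemma does more than assert existence: its proof exhibits an explicit correspondence between the nontrivial nonredundant SB-decompositions of the base automaton and those of its $k$-extension, realized by adjoining or deleting the $k$ singleton blocks $\{p_0\},\ldots,\{p_{k-1}\}$. This correspondence is a bijection on the relevant S.P. partitions (those distinguishing states of the base), and it sends a decomposition into automata of $r$ and $s$ states to one into automata of $k+r$ and $k+s$ states. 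Since $A_{r,s}$ has a unique such decomposition up to order, its image under this bijection is the unique such decomposition of $A$, with the stated sizes $k+r$ and $k+s$.

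The main obstacle I anticipate is not the existence direction, which is immediate from the two lemmas, but ensuring that uniqueness genuinely transfers. I would need to confirm that Lemma~\ref{lemma:extension} really underwrites a \emph{bijective} correspondence of decompositions, not merely a two-way existence assertion; concretely, that every nontrivial nonredundant SB-decomposition of $A$ arises from one of $A_{r,s}$ by the singleton-block construction, so that no ``extra'' decompositions of $A$ can appear. The analysis in the proof of Lemma~\ref{lemma:extension} --- showing that any nontrivial S.P. partition of $A'$ which separates states of $K$ must isolate each $p_i$ in its own block --- supplies exactly this, and I would cite it to close the gap. The ASB-case is handled identically, since both lemmas treat SB- and ASB-decompositions in parallel.
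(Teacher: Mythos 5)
Your proposal is correct and follows exactly the route the paper intends: the paper's entire ``proof'' is the remark that a $k$-extension of a minimal DFA is again minimal, so Lemma~\ref{lemma:rectangle} and Lemma~\ref{lemma:extension} combine to give the result, which is precisely your construction. Your extra care about the uniqueness transfer --- noting that one must use the correspondence established inside the proof of Lemma~\ref{lemma:extension} (every S.P. partition of $A'$ distinguishing states of $K$ isolates each $p_i$ in a singleton block), rather than the lemma's bare existence statement --- is a legitimate refinement that the paper leaves implicit.
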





\begin{thebibliography}{}

\bibitem{BalcazarDiazGabarro} J.~L.~Balcazar, J.~Diaz, J.~Gabarro, {\em Structural
Complexity I.}, Springer-Verlag New York, 1988

\bibitem{EvenSelmanYacobi} S.~Even, A.~L.~Selman, Y.~Yacobi, {\em The Complexity of
Promise Problems with Applications to Public-Key Cryptography}, Information and Control
61(2): 159-173 (1984)

\bibitem{ShengYu} S.~Yu, {\em State Complexity: Recent Results and Open Problems},
Fundamenta Informaticae 64: 471-480 (2005)

\bibitem{Birget} J.~C.~Birget, {\em Intersection and Union of Regular Languages and State
Complexity}, Information Processing Letters 43: 185-190 (1992)

\bibitem{HartmanisStearns} J.~Hartmanis and R.~E.~Stearns, {\em Algebraic Structure Theory
of Sequential Machines}, Prentice-Hall, 1966

\bibitem{HopcroftUllman} J.~E.~Hopcroft and J.~D.~Ullman, {\em Introduction to Automata Theory,
Languages, and Computation}, Addison-Wesley, 1979

\end{thebibliography}
\end{document}